\title{autocratic strategies for iterated games with arbitrary action spaces}
\author{alex mcavoy and christoph hauert}
\theoremstyle{definition}
\newtheorem{corollary}{Corollary}
\newtheorem{definition}{Definition}
\newtheorem{lemma}{Lemma}
\newtheorem{proposition}{Proposition}
\newtheorem{remark}{Remark}
\newtheorem{theorem}{Theorem}
\newcommand{\T}{\intercal}
\newcommand{\eq}[1]{Eq.~(\ref{eq:#1})}
\newcommand{\fig}[1]{Fig.~\ref{fig:#1}}
\newcommand{\thm}[1]{Theorem~\ref{thm:#1}}
\begin{document}

\begin{abstract}
The recent discovery of zero-determinant strategies for the iterated Prisoner's Dilemma sparked a surge of interest in the surprising fact that a player can exert unilateral control over iterated interactions. These remarkable strategies, however, are known to exist only in games in which players choose between two alternative actions such as ``cooperate" and ``defect." Here we introduce a broader class of autocratic strategies by extending zero-determinant strategies to iterated games with more general action spaces. We use the continuous Donation Game as an example, which represents an instance of the Prisoner's Dilemma that intuitively extends to a continuous range of cooperation levels. Surprisingly, despite the fact that the opponent has infinitely many donation levels from which to choose, a player can devise an autocratic strategy to enforce a linear relationship between his or her payoff and that of the opponent even when restricting his or her actions to merely two discrete levels of cooperation. In particular, a player can use such a strategy to extort an unfair share of the payoffs from the opponent. Therefore, although the action space of the continuous Donation Game dwarfs that of the classical Prisoner's Dilemma, players can still devise relatively simple autocratic and, in particular, extortionate strategies.
\end{abstract}

\maketitle

\section{Introduction}

Game theory provides a powerful framework to study interactions between individuals (``players"). Among the most interesting types of interactions are social dilemmas, which result from conflicts of interest between individuals and groups \citep{dawes:ARP:1980,vanlange:OBHDP:2013}. Perhaps the most well-studied model of a social dilemma is the Prisoner's Dilemma \citep{axelrod:Science:1981}. A two-player game with actions, $C$ (``cooperate") and $D$ (``defect"), and payoff matrix,
\begin{align}\label{eq:classicalPD}
\bordermatrix{%
 & C & D \cr
C &\ R & \ S \cr
D &\ T & \ P \cr
} ,
\end{align}
is said to be a Prisoner's Dilemma if $T>R>P>S$ \citep{axelrod:BB:1984}. In a Prisoner's Dilemma, defection is the dominant action, yet the players can realize higher payoffs from mutual cooperation ($R$) than they can from mutual defection ($P$), resulting in a conflict of interest between the individual and the pair, which characterizes social dilemmas. Thus, in a one-shot game (i.e. a single encounter), two opponents have an incentive to defect against one another, but the outcome of mutual defection (the unique Nash equilibrium) is suboptimal for both players.

One proposed mechanism for the emergence of cooperation in games such as the Prisoner's Dilemma is direct reciprocity \citep{trivers:TQRB:1971,nowak:Science:2006}, which entails repeated encounters between players and allows for reciprocation of cooperative behaviors. In an iterated game, a player might forgo the temptation to defect in the present due to the threat of future retaliation--``the shadow of the future"--or the possibility of future rewards for cooperating \citep{axelrod:BB:1984,delton:PNAS:2011}, phenomena for which there is both theoretical and empirical support \citep{heide:AMJ:1992,bo:AER:2005}. One example of a strategy for the iterated game is to copy the action of the opponent in the previous round (``tit-for-tat") \citep{axelrod:BB:1984}. Alternatively, a player might choose to retain his or her action from the previous round if and only if the most recent payoff was $R$ or $T$ (``win-stay, lose-shift") \citep{nowak:Nature:1993}. These examples are among the simplest and most successful strategies for the iterated Prisoner's Dilemma \citep{nowak:BP:2006}.

In a landmark paper, \citet{press:PNAS:2012} deduce the existence of zero-determinant strategies, which allow a single player to exert much more control over this game than previously thought possible. Since their introduction, these strategies have been extended to cover multiplayer social dilemmas \citep{hilbe:PNAS:2014,pan:SR:2015} and temporally-discounted games \citep{hilbe:GEB:2015}. Moreover, zero-determinant strategies have been studied in the context of evolutionary game theory \citep{hilbe:PNAS:2013,adami:NC:2013,stewart:PNAS:2013,szolnoki:PRE:2014a,hilbe:JTB:2015}, adaptive dynamics \citep{hilbe:PLOSONE:2013}, and human behavioral experiments \citep{hilbe:NC:2014}. In each of these studies, the game is assumed to have only two actions: cooperate and defect. In fact, the qualifier ``zero-determinant" actually reflects this assumption because these strategies force a matrix determinant to vanish for action spaces with only two options. We show here that this assumption is unnecessary.

More specifically, suppose that players $X$ and $Y$ interact repeatedly with no limit on the number of interactions. For games with two actions, $C$ and $D$, a memory-one strategy for player $X$ is a vector, $\mathbf{p}=\left(p_{CC},p_{CD},p_{DC},p_{DD}\right)^{\T}$, where $p_{xy}$ is the probability that $X$ cooperates following an outcome in which $X$ plays $x$ and $Y$ plays $y$. Let $\mathbf{s}_{X}=\left(R,S,T,P\right)^{\T}$ and $\mathbf{s}_{Y}=\left(R,T,S,P\right)^{\T}$ be the payoff vectors for players $X$ and $Y$, respectively, and let $\alpha$, $\beta$, and $\gamma$ be fixed constants. \citet{press:PNAS:2012} show that if there is a constant, $\phi$, for which
\begin{align}\label{eq:pressDysonVector}
\widetilde{\mathbf{p}} &:= \begin{pmatrix}p_{CC}-1 \\ p_{CD}-1 \\ p_{DC} \\ p_{DD}\end{pmatrix} = \phi \left( \alpha\mathbf{s}_{X} + \beta\mathbf{s}_{Y} + \gamma \right) ,
\end{align}
then $X$ can unilaterally enforce the linear relationship $\alpha\pi_{X}+\beta\pi_{Y}+\gamma =0$ on the average payoffs, $\pi_{X}$ and $\pi_{Y}$, by playing $\mathbf{p}$. A strategy, $\mathbf{p}$, that satisfies \eq{pressDysonVector} is known as a ``zero-determinant" strategy due to the fact that $\widetilde{\mathbf{p}}$ causes a particular matrix determinant to vanish \citep{press:PNAS:2012}. However, what is important about these strategies is not that they cause some matrix determinant to vanish, but rather that they unilaterally enforce a linear relationship on expected payoffs. Therefore, we refer to these strategies and their generalization to arbitrary action spaces as autocratic strategies. Of particular interest are extortionate strategies, which ensure that a player receives an unfair share of the payoffs exceeding the payoff at the Nash equilibrium \citep{stewart:PNAS:2012}. Hence, if $P$ is the payoff for mutual defection in the Prisoner's Dilemma, then $\mathbf{p}$ is a an extortionate strategy for player $X$ if $\mathbf{p}$ enforces the equation $\pi_{X}-P=\chi\left(\pi_{Y}-P\right)$ for some extortion factor, $\chi\geqslant 1$.

The most common extensions of finite action sets are continuous action spaces. An element $s\in\left[0,K\right]$ represents a player's investment or cooperation level (up to some maximum, $K$), such as the amount a player invests in a public good \citep{doebeli:EL:2005}; the volume of blood one vampire bat donates to another \citep{wilkinson:Nature:1984}; the amount of resources used by microbes to produce siderophores \citep{west:PRSB:2003}; or the effort expended in intraspecies grooming \citep{hemelrijk:AB:1994,akinyi:AB:2013}. It is important to note that games with continuous action spaces can yield qualitatively different results than their discrete counterparts. 
For example, the strategy ``raise-the-stakes" initially offers a small investment in Prisoner's Dilemma interactions and subsequently raises the contribution in discrete increments if the opponent matches or betters the investment \citep{roberts:Nature:1998}. However, in a continuous action space, raise-the-stakes evolves into defection due to the fact that another strategy can be arbitrarily close--in terms of the initial investment and subsequent increases in contribution--yet exhibit qualitatively different behavior \citep{killingback:Nature:1999}. In particular, raise-the-stakes succeeds in a discrete action space but fails in a continuous one.

\citet{akin:Games:2015} calls the vector, $\widetilde{\mathbf{p}}$, of \eq{pressDysonVector} a Press-Dyson vector. For continuous action spaces, the payoff vectors, $\mathbf{s}_{X}$ and $\mathbf{s}_{Y}$, must be replaced by payoff functions, $u_{X}\left(x,y\right)$ and $u_{Y}\left(x,y\right)$. That is, $u_{i}\left(x,y\right)$ denotes the payoff to player $i$ when $X$ plays $x$ and $Y$ plays $y$. The analogue of the linear combination $\alpha\mathbf{s}_{X}+\beta\mathbf{s}_{Y}+\gamma$ is the function $\alpha u_{X} + \beta u_{Y} + \gamma$. Here, we formally define a Press-Dyson function that extends the Press-Dyson vector to iterated games with arbitrary action spaces. This extension allows one to deduce the existence of strategies that unilaterally enforce linear relationships on the payoffs in more general iterated games. In particular, autocratic (or zero-determinant strategies) are not peculiar to games with two (or even finitely many) actions. Moreover, under mild conditions, player $X$ can enforce a linear relationship on expected payoffs by choosing a memory-one strategy that plays just two actions, despite the fact that the opponent may have an infinite number of actions from which to choose. We give examples of such autocratic strategies in the continuous Donation Game, which represents an instance of the Prisoner's Dilemma but with an extended, continuous action space.

\section{Autocratic strategies}
Consider a two-player iterated game with actions spaces, $S_{X}$ and $S_{Y}$, and payoff functions, $u_{X}\left(x,y\right)$ and $u_{Y}\left(x,y\right)$, for players $X$ and $Y$, respectively. Players $X$ and $Y$ interact repeatedly (infinitely many times), deriving a payoff at each round based on $u_{X}$ and $u_{Y}$. We treat games with temporally-discounted payoffs, which means that for some discounting factor $\lambda$ with $0<\lambda <1$, a payoff of $1$ at time $t+\tau$ is treated the same as a payoff of $\lambda^{\tau}$ at time $t$ \citep[see][]{fudenberg:MIT:1991}. Alternatively, one may interpret this game as having a finite number of rounds, where in any given round $\lambda$ denotes the probability of another round \citep{nowak:Science:2006}, which results in an expected game length of $1/\left(1-\lambda\right)$ rounds.

Suppose that, for each $t\geqslant 0$, $x_{t}$ and $y_{t}$ are the actions used by players $X$ and $Y$ at time $t$. Then, irrespective of the interpretation of $\lambda$, the average payoff to player $X$ is
\begin{align}\label{eq:totalDiscountedPayoff}
\pi_{X} &= \left(1-\lambda\right)\sum_{t=0}^{\infty}\lambda^{t}u_{X}\left(x_{t},y_{t}\right) .
\end{align}
The payoff to player $Y$, $\pi_{Y}$, is obtained by replacing $u_{X}$ with $u_{Y}$ in Eq. (\ref{eq:totalDiscountedPayoff}). If the strategies of $X$ and $Y$ are stochastic, then the payoffs are random variables with expectations, $\pi_{X}$ and $\pi_{Y}$ (see \textbf{Supporting Information}). Of particular interest are memory-one strategies, which are probabilistic strategies that depend on only the most recent outcome of the game. If $\sigma_{X}$ is a memory-one strategy for $X$, then we denote by $\sigma_{X}\left[x,y\right]\left(s\right)$ the probability that $X$ uses $s$ after $X$ plays $x$ and $Y$ plays $y$ (see Fig. \ref{fig:density}).

\begin{figure*}
\begin{center}
\includegraphics[scale=0.5]{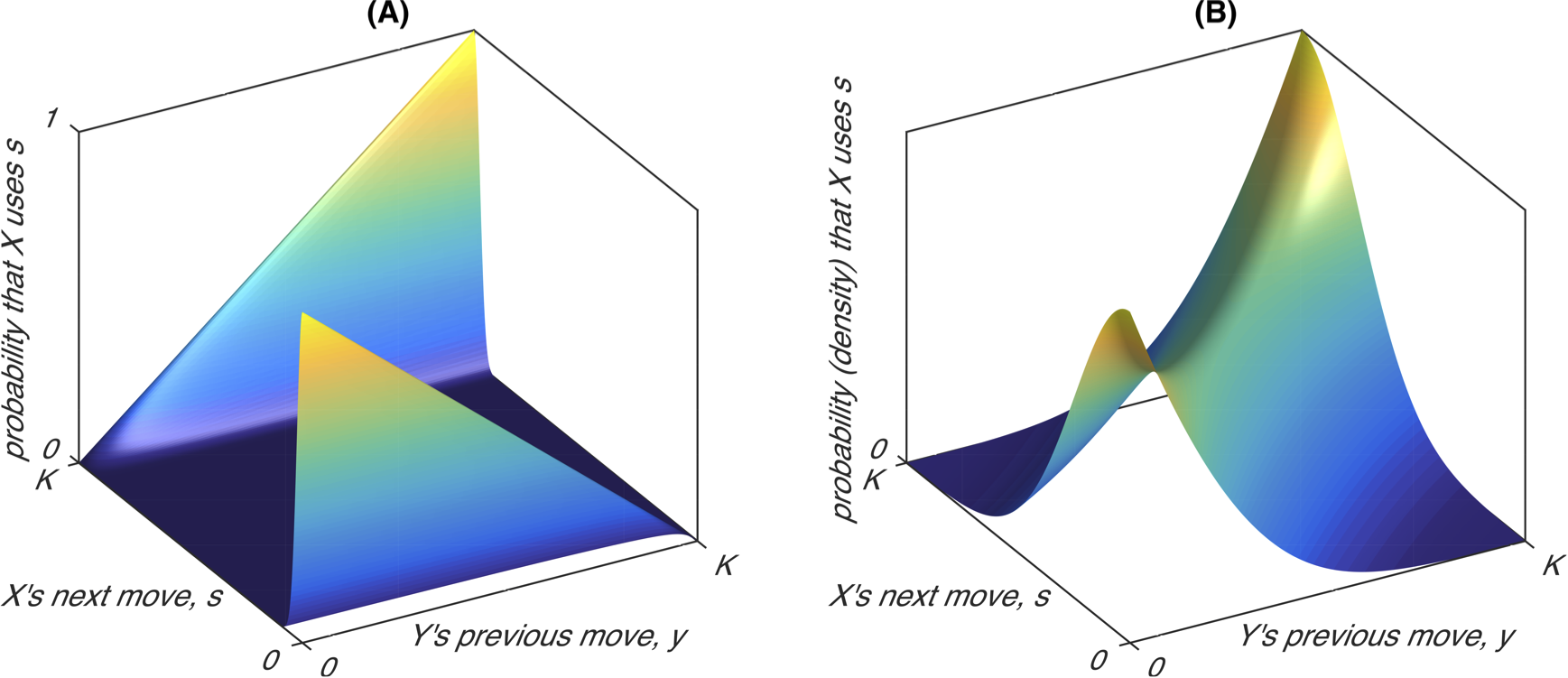}
\end{center}
\caption{Reactive, memory-one strategies, $\sigma_{X}\left[x,y\right]\left(s\right) =\sigma_{X}\left[y\right]\left(s\right)$, for a game whose action space is the interval $\left[0,K\right]$. In (A), player $X$ uses $Y$'s action in the previous round to determine the probabilities with which she plays $0$ and $K$ in the next round. As $Y$'s previous action (investment level), $y$, increases, so does the probability that $X$ uses $K$ in the subsequent round. Since $X$ plays only these two actions, this strategy is called a two-point strategy. In (B), player $X$ uses $Y$'s action in the previous round to determine the probability density function she uses to devise her next action. In contrast to (A), which depicts a strategy concentrated on just two actions, the strategy depicted in (B) is concentrated on a continuous range of actions in $\left[0,K\right]$. As $Y$'s previous action, $y$, increases, the mean of the density function governing $X$'s next action increases, which indicates that $X$ is more willing reciprocate by increasing her investment (action) in response to $Y$ increasing his investment.
\label{fig:density}}
\end{figure*}

The proofs of the existence of zero-determinant strategies (both in games with and without discounting) rely heavily on the fact that the action space is finite \citep{press:PNAS:2012,hilbe:GEB:2015,akin:Games:2015}. In particular, it remains unclear whether zero-determinant strategies are consequences of the finiteness of the action space or instances of a more general concept. Here, we introduce autocratic strategies as an extension of zero-determinant strategies to discounted games with arbitrary action spaces. The traditional, undiscounted case is recovered in the limit $\lambda\rightarrow 1$.
\begin{theorem}[Autocratic strategies]\label{thm:mainTheorem}
Suppose that $\sigma_{X}\left[x,y\right]$ is a memory-one strategy for player $X$ and let $\sigma_{X}^{0}$ be player $X$'s initial action. If, for some bounded function, $\psi$, the equation
\begin{align}\label{eq:mainEquation}
\alpha u_{X}\left(x,y\right) + \beta u_{Y}\left(x,y\right) + \gamma &= \psi\left(x\right) - \lambda\int_{s\in S_{X}}\psi\left(s\right)\,d\sigma_{X}\left[x,y\right]\left(s\right)  - \left(1-\lambda\right)\int_{s\in S_{X}}\psi\left(s\right)\,d\sigma_{X}^{0}\left(s\right)
\end{align}
holds for each $x\in S_{X}$ and $y\in S_{Y}$, then $\sigma_{X}^{0}$ and $\sigma_{X}\left[x,y\right]$ together enforce the linear payoff relationship
\begin{align}\label{eq:linearRelationship}
\alpha\pi_{X}+\beta\pi_{Y}+\gamma &= 0
\end{align}
for \emph{any} strategy of player $Y$. In other words, the pair $\left(\sigma_{X}^{0},\sigma_{X}\left[x,y\right]\right)$ is an autocratic strategy for player $X$.
\end{theorem}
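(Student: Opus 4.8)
The plan is to substitute the defining identity \eq{mainEquation} into the discounted payoff series and extract a telescoping cancellation. First I would note that, since $\left(1-\lambda\right)\sum_{t=0}^{\infty}\lambda^{t}=1$, the constant $\gamma$ can be absorbed into the sum, so that
\[
\alpha\pi_{X}+\beta\pi_{Y}+\gamma = \left(1-\lambda\right)\sum_{t=0}^{\infty}\lambda^{t}\left[\alpha u_{X}\left(x_{t},y_{t}\right)+\beta u_{Y}\left(x_{t},y_{t}\right)+\gamma\right] .
\]
Taking the expectation $\mathbb{E}$ over the randomness in both players' play and passing it through the series (justified because $\psi$ and the payoffs are bounded while $0<\lambda<1$, so the sum converges absolutely) reduces the problem to evaluating the expectation of each summand.

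Second, I would rewrite each summand using \eq{mainEquation}. The essential probabilistic input is the memory-one property of $X$: conditional on the outcome $\left(x_{t},y_{t}\right)$ of round $t$, and hence conditional on the entire history $\mathcal{F}_{t}$ up to round $t$, the next action $x_{t+1}$ is distributed according to $\sigma_{X}\left[x_{t},y_{t}\right]$, no matter what strategy $Y$ employs. Writing $\Psi_{t}:=\mathbb{E}\left[\psi\left(x_{t}\right)\right]$ and invoking the tower property therefore gives $\mathbb{E}\left[\int_{s\in S_{X}}\psi\left(s\right)\,d\sigma_{X}\left[x_{t},y_{t}\right]\left(s\right)\right]=\Psi_{t+1}$, while the initial term is the constant $\int_{s\in S_{X}}\psi\left(s\right)\,d\sigma_{X}^{0}\left(s\right)=\Psi_{0}$. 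Thus the expected summand collapses to $\Psi_{t}-\lambda\Psi_{t+1}-\left(1-\lambda\right)\Psi_{0}$.

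The final step is the telescoping computation. Substituting back,
\[
\mathbb{E}\left[\alpha\pi_{X}+\beta\pi_{Y}+\gamma\right] = \left(1-\lambda\right)\sum_{t=0}^{\infty}\lambda^{t}\left(\Psi_{t}-\lambda\Psi_{t+1}\right) - \left(1-\lambda\right)^{2}\Psi_{0}\sum_{t=0}^{\infty}\lambda^{t} .
\]
After reindexing the term $\lambda^{t}\Psi_{t+1}$ the first sum telescopes to $\left(1-\lambda\right)\Psi_{0}$, and the geometric second sum also equals $\left(1-\lambda\right)\Psi_{0}$, so the two cancel and the right-hand side vanishes. Since $\pi_{X}$ and $\pi_{Y}$ denote the expected payoffs, this is precisely \eq{linearRelationship}, and the cancellation never used any property of $Y$'s strategy, which yields the ``any strategy of $Y$'' conclusion.

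I expect the main obstacle to be rigorously justifying the two analytic interchanges in the presence of an arbitrary, possibly non-discrete action space: first, that $\mathbb{E}$ may be passed through the infinite discounted sum; and second, that the conditional-expectation identity $\mathbb{E}\left[\psi\left(x_{t+1}\right)\mid\mathcal{F}_{t}\right]=\int_{s\in S_{X}}\psi\left(s\right)\,d\sigma_{X}\left[x_{t},y_{t}\right]\left(s\right)$ holds as a measure-theoretic statement for the transition kernel $\sigma_{X}\left[x,y\right]$. Boundedness of $\psi$ together with $0<\lambda<1$ handles the first via dominated convergence, and the measurable-kernel interpretation of memory-one strategies handles the second; everything else is the elementary telescoping algebra above.
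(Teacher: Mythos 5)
Your argument is correct and is essentially the paper's own proof: the identity $\mathbb{E}\left[\int_{s\in S_{X}}\psi\left(s\right)\,d\sigma_{X}\left[x_{t},y_{t}\right]\left(s\right)\right]=\Psi_{t+1}$ is precisely the content of the generalized Akin's Lemma (Lemma~\ref{lem:akinsLemma} together with Proposition~\ref{prop:mainProposition}), and your telescoping of $\lambda^{t}\left(\Psi_{t}-\lambda\Psi_{t+1}\right)$ against the geometric series absorbing $\left(1-\lambda\right)\Psi_{0}$ is the same cancellation the paper carries out at the level of the marginal measures $\nu_{t}$. The only cosmetic difference is order of operations: the paper proves the telescoping identity for indicator functions first and then extends to bounded $\psi$ by uniform simple-function approximation, whereas you integrate $\psi$ immediately and telescope the scalar sequence $\Psi_{t}$ directly, with the measurable-kernel justification you flag being exactly what the paper's Lemma~\ref{lem:technicalLemma} and the construction of $\left\{\nu_{t}\right\}_{t\geqslant 0}$ supply.
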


Note that the initial action, $\sigma_{X}^{0}$, in Eq. (\ref{eq:mainEquation}) becomes irrelevant without discounting ($\lambda\to1$). The function $\psi$ may be interpreted as a ``scaling function" that is used to ensure $\sigma_{X}\left[x,y\right]$ is a feasible memory-one strategy; that is, $\psi$ plays the same role as the scalar $\phi$ in \eq{pressDysonVector}, which is chosen so that the entries of $\mathbf{p}$ are all between $0$ and $1$. We call the right-hand side of Eq. (\ref{eq:mainEquation}) a Press-Dyson function, which extends the Press-Dyson vector of \eq{pressDysonVector} to arbitrary action spaces (see \textbf{Supporting Information}). In contrast to action spaces with two options (``cooperate" and ``defect", for instance), autocratic strategies are defined only implicitly via Eq. (\ref{eq:mainEquation}) for general action spaces (and actually already for games with just three actions).

For each $x$ and $y$, the integral, $\int_{s\in S_{X}}\psi\left(s\right)\,d\sigma_{X}\left[x,y\right]\left(s\right)$, may be thought of as the weighted average (expectation) of $\psi$ with respect to $\sigma_{X}\left[x,y\right]$. Since the integral is taken against $\sigma_{X}\left[x,y\right]$, in general one cannot solve \eq{mainEquation} explicitly for $\sigma_{X}\left[x,y\right]$, so it is typically not possible to directly specify all pairs $\left(\sigma_{X}^{0},\sigma_{X}\left[x,y\right]\right)$ that unilaterally enforce \eq{linearRelationship}.

Interestingly, under mild conditions, $\sigma_{X}$ can be chosen to be a remarkably simple ``two-point" strategy, concentrated on just two actions, $s_{1}$ and $s_{2}$ (see Corollary \ref{cor:mainCorollary} in \textbf{Supporting Information}). Player $X$ can enforce \eq{linearRelationship} by playing either $s_{1}$ or $s_{2}$ in each round, with probabilities determined by the outcome of the previous round (see \fig{density}(A)). Thus, a strategy of this form uses the (memory-one) history of previous play only to adjust the relative weights placed on $s_{1}$ and $s_{2}$, while $s_{1}$ and $s_{2}$ themselves remain unchanged. Unlike in the case of arbitrary $\sigma_{X}$, for fixed $\psi$, $s_{1}$, and $s_{2}$, it is possible to explicitly solve for all autocratic, two-point strategies on $s_{1}$ and $s_{2}$ satisfying \eq{mainEquation} (see Remark \ref{rem:twoPointExplicit} in \textbf{Supporting Information}). In a two-action game, every memory-one strategy is concentrated on two points, which explains why games like the classical Prisoner's Dilemma fail to capture the implicit nature of autocratic strategies. 

\section{Continuous Donation Game}\label{sec:continuousDonationGame}
In the classical Donation Game, cooperators pay a cost, $c$, to provide a benefit, $b$, to the opponent \citep{sigmund:PUP:2010}. Defectors make no donations and pay no costs. The payoff matrix for this game is given by Eq. (\ref{eq:classicalPD}) with $R=b-c$, $S=-c$, $T=b$, and $P=0$. For $b>c>0$ a social dilemma arises because the payoff for mutual defection (the Nash equilibrium) is strictly less than the payoff for mutual cooperation, yet both players are tempted to shirk donations, which represents an instance of the Prisoner's Dilemma. In the iterated (and undiscounted) version of the Donation Game, the main result of \citet{press:PNAS:2012} implies that a memory-one strategy for player $X$, $\mathbf{p}=\left(p_{CC},p_{CD},p_{DC},p_{DD}\right)^{\T}$, enforces $\pi_{X}-\kappa =\chi\left(\pi_{Y}-\kappa\right)$ for some $\kappa$ and $\chi\geqslant 1$ whenever there exists a scalar, $\phi$, for which
\begin{subequations}
\label{eq:pdiscrete}
\begin{align}
p_{CC} &= 1-\phi\left(\chi -1\right)\left(b-c-\kappa\right) ; \label{eq:CCdiscrete} \\
p_{CD} &= 1-\phi\left(\chi b+c-\left(\chi -1\right)\kappa\right) ; \\
p_{DC} &= \phi\left(b+\chi c+\left(\chi -1\right)\kappa\right) ; \\
p_{DD} &= \phi\left(\chi -1\right)\kappa . \label{eq:DDdiscrete}
\end{align}
\end{subequations}
The term $\chi$ denotes the extortion factor and $\kappa$ the baseline payoff, i.e. the payoff of $\mathbf{p}$ against itself \citep{hilbe:NC:2014}. For example, if $\kappa =0$ and $\chi\geqslant 1$, then \eq{pdiscrete} defines an extortionate strategy, which unilaterally enforces $\pi_{X}=\chi\pi_{Y}$ as long as $\phi$ is sufficiently small. In this sense, $\phi$ acts as a scaling factor to ensure each coordinate of $\mathbf{p}$ falls between $0$ and $1$.

Instead of discrete ``levels" of cooperation, the continuous Donation Game admits a range of cooperation levels, $\left[0,K\right]$, with $K>0$ indicating maximal cooperation. The costs and benefits associated with $s$, denoted by $c\left(s\right)$ and $b\left(s\right)$, respectively, are nondecreasing functions of $s$ and, in analogy to the discrete case, satisfy $b\left(s\right)>c\left(s\right)$ for $s>0$ and $c(0)=b(0)=0$ \citep{killingback:PRSB:1999,wahl:JTB:1999a,wahl:JTB:1999b,killingback:AN:2002,doebeli:S:2004}. The payoff matrix, \eq{classicalPD} is replaced by payoff functions, with the payoffs to players $X$ and $Y$ for playing $x$ against $y$ being $u_{X}\left(x,y\right) :=b\left(y\right) -c\left(x\right)$ and $u_{Y}\left(x,y\right) =u_{X}\left(y,x\right) =b\left(x\right) -c\left(y\right)$, respectively (i.e. the game is symmetric). For this natural extension of the classical Donation Game, we first show the existence of autocratic and, in particular, extortionate strategies, that play only $x=0$ and $x=K$ and ignore all other cooperation levels.

\subsection{Two-point autocratic strategies}\label{subsec:twoPointStrategies}

For the continuous Donation Game, we show, using Theorem \ref{thm:mainTheorem}, that player $X$ can unilaterally enforce $\pi_{X}-\kappa =\chi\left(\pi_{Y}-\kappa\right)$ for fixed $\chi$ and $\kappa$ by playing only two actions: $x=0$ (defect) and $x=K$ (fully cooperate). Conditioned on the fact that $X$ plays only $0$ and $K$, a memory-one strategy for player $X$ is defined by a reaction function, $p\left(x,y\right)$, which denotes the probability that $X$ plays $K$ following an outcome in which $X$ plays $x\in\left\{0,K\right\}$ and $Y$ plays $y\in\left[0,K\right]$; $1-p\left(x,y\right)$ is the probability that $X$ plays $0$ (i.e. defects). Player $X$'s initial action is determined by the probability, $p_{0}$, that $X$ plays $x=K$ in the first round.

Consider the function $\psi\left(s\right) :=-\chi b\left(s\right) -c\left(s\right)$ and suppose that $0\leqslant\kappa\leqslant b\left(K\right) -c\left(K\right)$. If player $X$'s initial action is $x=K$ with probability $p_{0}$ and $x=0$ with probability $1-p_{0}$ then, for sufficiently weak discounting or, equivalently, sufficiently many rounds of interaction,
\begin{align}\label{eq:deltaConditionDonation}
\lambda &\geqslant \frac{b\left(K\right) +\chi c\left(K\right)}{\chi b\left(K\right) +c\left(K\right)} ,
\end{align}
and for $p_{0}$ falling within a feasible range (see Eq. (\ref{sieq:twoPointInitialBoundExtortion}) in \textbf{Supporting Information}), the reaction function
\begin{align}\label{eq:twoPointDonationExample}
p(x,y) &= \frac{1}{\lambda}\left(\frac{b\left(y\right) + \chi c\left(y\right) +\left(\chi -1\right)\kappa}{\chi b\left(K\right) + c\left(K\right)}-\left(1-\lambda\right) p_{0}\right) ,
\end{align}
defines a memory-one strategy,
\begin{subequations}
\begin{align}
\sigma_{X}^{0} &= \left(1-p_{0}\right)\delta_{0}+p_{0}\delta_{K} ; \\
\sigma_{X}\left[x,y\right] &= \left( 1-p\left(x,y\right) \right) \delta_{0} + p\left(x,y\right) \delta_{K} ,
\end{align}
\end{subequations}
where $\delta_{s}$ denotes the Dirac measure centered at $s\in\left[0,K\right]$, that enforces the equation $\pi_{X}-\kappa=\chi\left(\pi_{Y}-\kappa\right)$. If there is no discounting (i.e. $\lambda =1$), then the initial move is irrelevant and $p_{0}$ can be anything in the interval $\left[0,1\right]$. Note that \eq{twoPointDonationExample} represents a reactive strategy \citep{nowak:AAM:1990} because $X$ conditions her play on only the previous move of the opponent (see \fig{donationGame}(B)).

For $\kappa =0$ and $\chi\geqslant 1$, \eq{twoPointDonationExample} defines an extortionate strategy, $\sigma_{X}$, which guarantees player $X$ a fixed share of the payoffs over the payoff for mutual defection. If $\chi =1$ (and $\kappa$ is arbitrary), then this strategy is fair since player $X$ ensures the opponent has a payoff equal to her own \citep{hilbe:PNAS:2014}. On the other hand, if $\kappa =b\left(K\right) -c\left(K\right)$ and $\chi\geqslant 1$, then \eq{twoPointDonationExample} defines a generous (or ``compliant") strategy \citep{stewart:PNAS:2012,hilbe:PNAS:2013}. By playing a generous strategy, player $X$ ensures that her payoff is at most that of her opponent's. For each of these types of strategies, the probability that $X$ reacts to $y$ by cooperating is increasing as a function of $y$. In particular, $X$ is most likely to cooperate after $Y$ fully cooperates ($y=K$) and is most likely to defect after $Y$ defects ($y=0$). Moreover, this single choice of $\psi\left(s\right) =-\chi b\left(s\right) -c\left(s\right)$ demonstrates the existence of each of these three classes of autocratic strategies for the continuous Donation Game provided $\lambda$ is sufficiently weak. 

Similarly, if $\psi\left(s\right) =b\left(s\right)$ and $\lambda\geqslant c\left(K\right) /b\left(K\right)$, then
\begin{align}
p\left(x,y\right) &= \frac{1}{\lambda}\left(\frac{c\left(y\right) +\gamma}{b\left(K\right)}-\left(1-\lambda\right) p_{0}\right) ,
\end{align}
defines a reaction probability that allows $X$ to set $\pi_{Y}=\gamma$ for any $\gamma$ satisfying $0\leqslant\gamma\leqslant b\left(K\right) -c\left(K\right)$, provided $p_{0}$ falls within a suitable range (see Eq. (\ref{sieq:twoPointInitialBoundEqualizer}) in \textbf{Supporting Information}). A strategy that allows a player to single-handedly set the score of the opponent is termed an equalizer strategy \citep{boerlijst:AMM:1997,hilbe:PNAS:2013}. However, no autocratic strategy allows player $X$ to set her own score via \eq{mainEquation} (see \textbf{Supporting Information}). These results are consistent with the observations of \citet{press:PNAS:2012} that, in the classical Prisoner's Dilemma without discounting, player $X$ cannot set her own score but can set player $Y$'s score to anything between the payoffs for mutual defection and mutual cooperation.

\begin{figure*}
\includegraphics[scale=0.35]{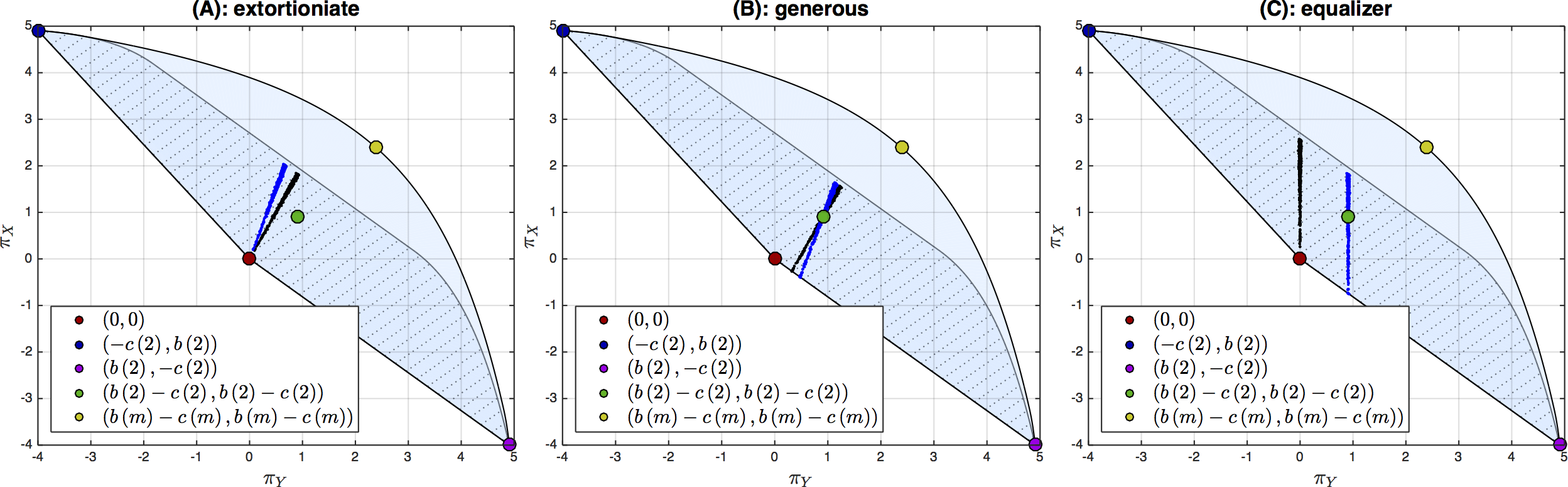}
\caption{Two-point extortionate, generous, and equalizer strategies for the continuous Donation Game with action spaces $S_{X}=\left\{0,2\right\}$ and $S_{Y}=\left[0,2\right]$ and discounting factor $\lambda =0.95$. The costs are linear, $c\left(s\right) =2s$, and the benefits saturating, $b\left(c\left(s\right)\right) =5\left(1-e^{-c\left(s\right)}\right)$. The light blue regions represent the feasible payoff pairs, $\left(\pi_{Y},\pi_{X}\right)$, for the repeated, continuous Donation Game when $X$ takes advantage of the entire action space, $\left[0,2\right]$, and the hatched regions represent the feasible payoff pairs when $X$ uses just $\left\{0,2\right\}$. In each panel, the simulation results were obtained by plotting the average payoffs for a fixed autocratic strategy against $1000$ randomly chosen memory-one strategies. (A) depicts extortionate strategies enforcing $\pi_{X}=\chi\pi_{Y}$ with $\chi =2$ (black) and $\chi =3$ (blue). (B) illustrates generous strategies enforcing $b\left(2\right) -c\left(2\right) - \pi_{X} = \chi\left(b\left(2\right) -c\left(2\right) -\pi_{Y}\right)$ with $\chi =2$ (black) and $\chi =3$ (blue). (C) demonstrates equalizer strategies enforcing $\pi_{Y}=\gamma$ with $\gamma =0$ (black) and $\gamma =b\left(2\right) -c\left(2\right)$ (blue). From (A) and (B), it is clear that neither mutual extortion nor mutual generosity is a Nash equilibrium since, when both players use extortionate or generous strategies, either player can single-handedly improve the payoffs of both players by deviating from his or her strategy. When both players together use the same action, the best outcome occurs at an investment level of $m=\frac{1}{2}\log 5<2$, which results in a payoff of $b\left(m\right) -c\left(m\right) >b\left(2\right) -c\left(2\right)$ to both players (yellow dot). \label{fig:twoPointSimulation}}
\end{figure*}

\subsection{Deterministic autocratic strategies}

One feature of two-point autocratic strategies is that they allow a player to exert control over the payoffs of a repeated game while ignoring most of the action space. One drawback is that they restrict the region of feasible game payoffs (see Fig. \ref{fig:twoPointSimulation}). This shortcoming of two-point strategies leads to a new class of strategies called deterministic strategies, which are perhaps the simplest alternatives to two-point strategies.

A deterministic strategy requires a player to respond to a history of previous play by playing an action with certainty rather than probabilistically. For example, a memory-one deterministic strategy for player $X$ is defined by (i) an initial action, $x_{0}\in S_{X}$, and (ii) a reaction function, $r^{X}:S_{X}\times S_{Y}\rightarrow S_{X}$, such that $X$ plays $r^{X}\left(x,y\right)$ following an outcome in which $X$ plays $x$ and $Y$ plays $y$. One well-known example of a deterministic strategy is tit-for-tat in the classical Prisoner's Dilemma, which is defined by $x_{0}=C$ (initially cooperate) and $r^{X}\left(x,y\right) =y$ (do what the opponent did in the previous round). Tit-for-tat is also an autocratic strategy since it enforces the fair relationship $\pi_{X}=\pi_{Y}$ \citep{stewart:PNAS:2013}.

For general memory-one deterministic strategies, the condition for the existence of autocratic strategies, Eq. (\ref{eq:mainEquation}), becomes
\begin{align}\label{eq:mainEquationDeterministic}
\alpha u_{X}\left(x,y\right) + \beta  u_{Y}\left(x,y\right) + \gamma &= \psi\left(x\right) - \lambda\psi\left(r^{X}\left(x,y\right)\right) - \left(1-\lambda\right)\psi\left(x_{0}\right) .
\end{align}
If Eq. (\ref{eq:mainEquationDeterministic}) holds for each $x\in S_{X}$ and $y\in S_{Y}$, then the deterministic strategy defined by $x_{0}$ and $r^{X}$ enforces $\alpha\pi_{X}+\beta\pi_{Y}+\gamma =0$. Thus, in order to enforce $\pi_{X}-\kappa =\chi\left(\pi_{Y}-\kappa\right)$ for $0\leqslant\kappa\leqslant b\left(K\right) -c\left(K\right)$, one may choose $\psi\left(s\right) :=-\chi b\left(s\right) -c\left(s\right)$ and use the reaction function
\begin{align}
r^{X}\left(x,y\right) &= \psi^{-1}\left(\frac{-b\left(y\right) -\chi c\left(y\right) -\left(\chi -1\right)\kappa -\left(1-\lambda\right)\psi\left(x_{0}\right)}{\lambda}\right) ,
\end{align}
where $\psi^{-1}\left(\cdots\right)$ denotes the inverse of the function $\psi$, provided $\lambda$ satisfies Eq. (\ref{eq:deltaConditionDonation}) and $x_{0}$ falls within a feasible range (see Eq. (\ref{sieq:deterministicInitialBoundExtortion}) in \textbf{Supporting Information}).

Similarly, to enforce $\pi_{Y}=\gamma$ for $0\leqslant\gamma\leqslant b\left(K\right) -c\left(K\right)$, one may choose $\psi\left(s\right) :=b\left(s\right)$ and use the reaction function
\begin{align}
r^{X}\left(x,y\right) &= \psi^{-1}\left(\frac{c\left(y\right) +\gamma -\left(1-\lambda\right)\psi\left(x_{0}\right)}{\lambda}\right) ,
\end{align}
provided $\lambda\geqslant c\left(K\right) /b\left(K\right)$ and, again, $x_{0}$ falls within a feasible range (see Eq. (\ref{sieq:deterministicInitialBoundEqualizer}) in \textbf{Supporting Information}).

Examples of deterministic extortionate, generous, and equalizer strategies are given in Fig. \ref{fig:deterministicSimulation}. It is evident that deterministic strategies increase the feasible region in which linear payoff relationships can be enforced as compared to their two-point counterparts (c.f. Fig. \ref{fig:twoPointSimulation}).

\begin{figure*}
\includegraphics[scale=0.35]{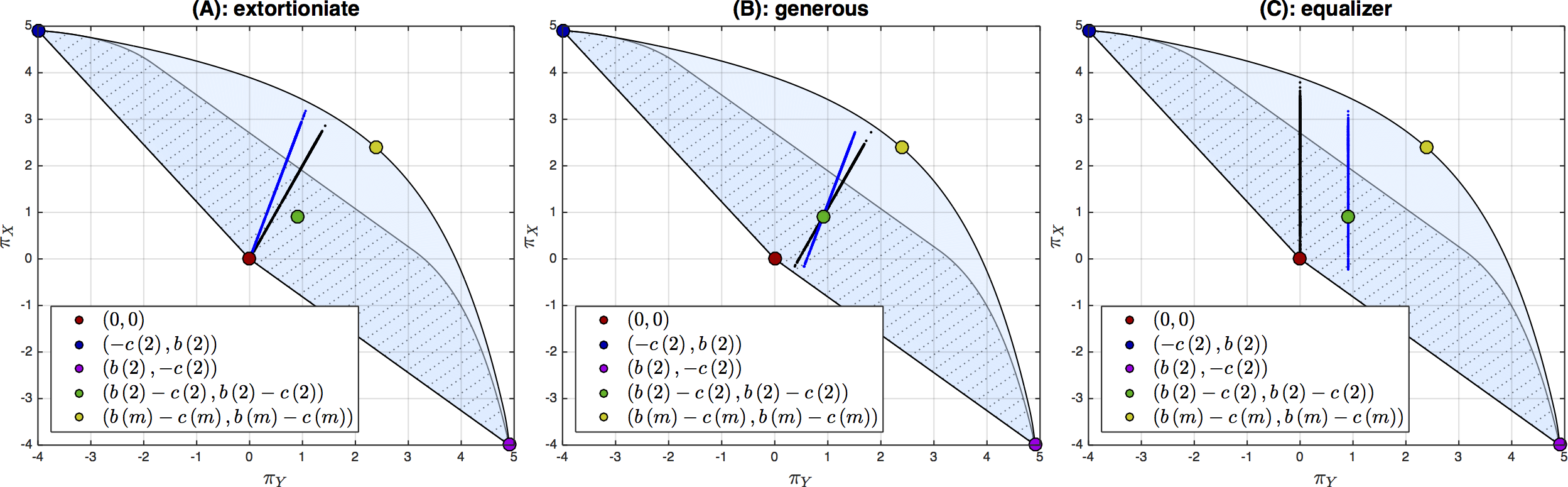}
\caption{Deterministic extortionate, generous, and equalizer strategies for the continuous Donation Game with action spaces $S_{X}=S_{Y}=\left[0,2\right]$ and discounting factor $\lambda =0.95$. As in Fig. \ref{fig:twoPointSimulation}, the simulation results in each panel were obtained by plotting the average payoffs for a fixed autocratic strategy against $1000$ randomly chosen memory-one strategies. (A) and (B) demonstrate extortionate and generous strategies, respectively, with $\chi =2$ (black) and $\chi =3$ (blue). (C) shows equalizer strategies enforcing $\pi_{Y}=\gamma$ with $\gamma =0$ (black) and $\gamma =b\left(2\right) -c\left(2\right)$ (blue). Since deterministic strategies allow player $X$ to use a much larger portion of the action space than just $0$ and $K$, we observe that the linear relationships enforced by deterministic autocratic strategies cover a greater portion of the feasible region than do two-point strategies. The best outcome when both players use the same action occurs at $m=\frac{1}{2}\log 5<2$, which results in a mutual payoff of $b\left(m\right) -c\left(m\right) >b\left(2\right) -c\left(2\right)$ (yellow dot). \label{fig:deterministicSimulation}}
\end{figure*}

\section{Discussion}
In games with two actions, zero-determinant strategies are typically defined via a technical condition such as \eq{pressDysonVector} \citep{press:PNAS:2012,hilbe:GEB:2015}. This definition makes generalizations to games with larger action spaces difficult because \eq{pressDysonVector} makes sense only for two-action games. Therefore, we introduce the more general term autocratic strategy for any strategy that unilaterally enforces a linear relationship on expected payoffs. Of course, this linear relationship is precisely what makes strategies satisfying \eq{pressDysonVector} interesting.

\thm{mainTheorem} provides a condition for the existence of autocratic strategies for games with general action spaces.  We illustrate this phenomenon with a continuous-action-space extension of the classical Donation Game, which represents an instance of the Prisoner's Dilemma. The existing literature on zero-determinant strategies for the classical Prisoner's Dilemma provides no way of treating this continuous extension of the Donation Game. However, \thm{mainTheorem} makes no assumptions on the action space of the game and thus applies to the continuous Donation Game as well as its classical counterpart.

Surprisingly, in many cases a player can enforce a linear relationship on expected payoffs by playing only two actions, despite the fact that the opponent may have infinitely many actions available to use (Corollary \ref{cor:mainCorollary} in \textbf{Supporting Information}). We demonstrate that the conditions guaranteeing the existence of extortionate, generous, fair, and equalizer strategies in the continuous Donation Game are in fact similar to those of the two-action case. However, despite the simplicity of these two-point strategies, a player needs to know how to respond to every possible move of the opponent; knowledge of how to respond to just defection ($y=0$) and full cooperation ($y=K$) does not suffice. Therefore, although a player using a two-point strategy plays only $x=0$ and $x=K$, these strategies represent a departure from the classical Donation Game.

Another important difference is that, whereas in the classical Prisoner's Dilemma mutual generosity represents a symmetric Nash equilibrium \citep{hilbe:GEB:2015}, this need not be the case in the continuous Donation Game. Instead, mutual generosity results in a payoff of $b\left(K\right) -c\left(K\right)$ for each player, where $K$ is the maximal investment, but intermediate levels of cooperation may yield $b\left(m\right) -c\left(m\right) >b\left(K\right) -c\left(K\right)$ for $m\in\left(0,K\right)$, i.e. both players fare better if they each invest $m$ instead of $K$ (see Figs. \ref{fig:twoPointSimulation} and \ref{fig:deterministicSimulation}). However, no player can enforce a generous relationship with baseline payoff $\kappa =b\left(m\right) -c\left(m\right)$ because it is outside of the feasible range for $\kappa$ (see \textbf{Supporting Information}). Thus, the performance of a generous strategy as a response to itself depends critically on whether the game has two actions or a continuous range of actions.

Extortionate strategies for the iterated Prisoner's Dilemma are not evolutionarily stable \citep{adami:NC:2013}. Since mutual generosity in the continuous Donation Game need not be a Nash equilibrium, it follows that generous strategies also need not be evolutionarily stable. Moreover, against human opponents, extortioners are punished by a refusal to fully cooperate, while generous players provide their opponents with an incentive to cooperate and fare better in experiments \citep{hilbe:NC:2014}. Such behavior supports what one would expect from a player using a fair autocratic strategy enforcing $\pi_{X}=\pi_{Y}$ (such as tit-for-tat): if the opponent ensures $\pi_{X}-\kappa =\chi\left(\pi_{Y}-\kappa\right)$ for some $\chi >1$, then both players get $\kappa$. In particular, fair strategies punish extortion and reward generosity. Based on our results on generous strategies for the continuous Donation Game, it would be interesting to see whether (human) experiments support the same conclusion and whether the participants succeed in securing payoffs that exceed those of mutual generosity. The performance of autocratic strategies in populations, however, is but one perspective on this recently-discovered class of strategies for repeated games.

In games with two discrete actions, our definition of a Press-Dyson function specializes to a multiple of the Press-Dyson vector, $C\widetilde{\mathbf{p}}$ with $C$ a constant (see \eq{pressDysonVector} and \textbf{Supporting Information} for details). The Press-Dyson vector is recovered by normalizing the Press-Dyson function and thus eliminating the constant, $C$. However, in games with $d$ actions, this function involves at least $d-1$ constants, which, for $d>2$, cannot be eliminated by normalization (see \textbf{Supporting Information} for an example with three actions). Therefore, based on \thm{mainTheorem}, in two-action games it is perhaps more appropriate to define a Press-Dyson vector to be any vector of the form $C\widetilde{\mathbf{p}}$. This distinction for games with two actions is minor, however, and does not change the fact that Theorem $\ref{thm:mainTheorem}$ covers all of the known results on the existence of zero-determinant strategies for repeated games.

More importantly, however, the analysis of iterated games with only two actions completely misses the fact that autocratic strategies are most naturally presented implicitly via \eq{mainEquation}. Even in the case of two actions, infinitely many autocratic strategies may exist \citep{press:PNAS:2012}, but their simplistic nature admits explicit solutions. Our extension shows that, in general, (i) autocratic strategies need not be unique and (ii) one cannot explicitly list all autocratic strategies that produce a fixed Press-Dyson function. Thus, for arbitrary action spaces (but already for games with $d>2$ actions), the space of autocratic strategies is more sophisticated than two-action games suggest. Notwithstanding the intrinsic difficulty in explicitly specifying all autocratic strategies, our results demonstrate that these strategies exist in a broad class of games and are not simply consequences of the finiteness of the action space in games such as the Prisoner's Dilemma.

\section*{Acknowledgments}

The authors are grateful to Christian Hilbe for reading an earlier draft of this manuscript and for suggesting that we look into deterministic strategies. The authors acknowledge financial support from the Natural Sciences and Engineering Research Council of Canada (NSERC), grant RGPIN-2015-05795.

\bibliographystyle{plainnat}

\newpage

\setcounter{section}{0}
\setcounter{equation}{0}
\renewcommand{\thesection}{SI.\arabic{section}}
\renewcommand{\theequation}{SI.\arabic{equation}}

\section*{Supporting Information}

\section{Iterated games with two players and measurable action spaces}\label{subsec:iteratedGames}
By ``action space," we mean a measurable space, $S$, equipped with a $\sigma$-algebra, $\mathcal{F}\left(S\right)$ (although we suppress $\mathcal{F}\left(S\right)$ and refer to the space simply as $S$). Informally, $S$ is the space of actions, decisions, investments, or options available to a player at each round of the iterated interaction and could be a finite set, a continuous interval, or something more complicated. Since the players need not have the same action space, we denote by $S_{X}$ the space of actions available to player $X$ and by $S_{Y}$ the space of actions available to player $Y$. In what follows, all functions are assumed to be measurable and bounded.

In each encounter (i.e. ``one-shot game"), the players receive payoffs based on a payoff function,
\begin{linenomath}
\begin{align}
u = \left(u_{X},u_{Y}\right) &: S_{X}\times S_{Y}\longrightarrow\mathbb{R}^{2} .
\end{align}
\end{linenomath}
The first and second coordinate functions, $u_{X}$ and $u_{Y}$, give the payoffs to players $X$ and $Y$, respectively. An iterated game between players $X$ and $Y$ consists of a sequence of these one-shot interactions. If, at time $t$, player $X$ uses $x_{t}\in S_{X}$ and player $Y$ uses $y_{t}\in S_{Y}$, then the (normalized) payoff to player $X$ for a sequence of $T+1$ interactions (from time $t=0$ to $t=T$) is
\begin{linenomath}
\begin{align}\label{eq:totalPayoff}
\frac{1-\lambda}{1-\lambda^{T+1}}\sum_{t=0}^{T}\lambda^{t}u_{X}\left(x_{t},y_{t}\right) ,
\end{align}
\end{linenomath}
where $\lambda$ is the discounting factor, $0<\lambda <1$. The payoff to player $Y$ is obtained by replacing $u_{X}$ by $u_{Y}$ in Eq. (\ref{eq:totalPayoff}). Thus, the discounted payoffs, $\lambda^{t}u_{X}\left(x_{t},y_{t}\right)$, are simply added up and then normalized by a factor of $\frac{1-\lambda}{1-\lambda^{T+1}}$ to ensure that the payoff for the repeated game is measured in the same units as the payoffs for individual encounters \citep{fudenberg:MIT:1991}. Moreover, provided the series $\sum_{t=0}^{\infty}u_{X}\left(x_{t},y_{t}\right)$ is Ces\`{a}ro summable, meaning $\lim_{T\rightarrow\infty}\frac{1}{T+1}\sum_{t=0}^{T}u_{X}\left(x_{t},y_{t}\right)$ exists, we have
\begin{linenomath}
\begin{align}
\lim_{\lambda\rightarrow 1^{-}}\left(1-\lambda\right)\sum_{t=0}^{\infty}\lambda^{t}u_{X}\left(x_{t},y_{t}\right) &= \lim_{T\rightarrow\infty}\frac{1}{T+1}\sum_{t=0}^{T}u_{X}\left(x_{t},y_{t}\right)
\end{align}
\end{linenomath}
\citep[see][]{korevaar:S:2004}. Therefore, payoffs in the undiscounted case may be obtained from the payoffs for discounted games in the limit $\lambda\rightarrow 1^{-}$, provided this limit exists \citep[see][]{hilbe:GEB:2015}.

Here, we consider stochastic strategies that condition on the history of play: both players observe the sequence of play up to the current period and use it to devise an action for the present encounter. In order to formally define such strategies, we first recall the notion of ``history" in a repeated game: a history at time $T$ is a sequence of action pairs,
\begin{linenomath}
\begin{align}
h^{T} &:= \Big(\left(x_{0},y_{0}\right) , \dots , \left(x_{T-1},y_{T-1}\right)\Big) ,
\end{align}
\end{linenomath}
indicating the sequence of play leading up to the interaction at time $T$ \citep{fudenberg:MIT:1991}. In other words, a history at time $T$ is an element of $\mathcal{H}^{T}:=\prod_{t=0}^{T-1}S_{X}\times S_{Y}$. Let $\mathcal{H}^{0}:=\left\{\varnothing\right\}$, where $\varnothing$ denotes the ``empty" history (which serves just to indicate that there has been no history of past play, i.e. that the game has not yet begun). The set of all possible histories is the disjoint union, $\mathcal{H}:=\bigsqcup_{T\geqslant 0}\mathcal{H}^{T}$. For $h^{T}=\Big(\left(x_{0},y_{0}\right) , \dots , \left(x_{T-1},y_{T-1}\right)\Big)$ and $t\leqslant T-1$, let
\begin{linenomath}
\begin{subequations}
\begin{align}
h_{t}^{T} &:= \left(x_{t},y_{t}\right) ; \\
h_{\leqslant t}^{T} &:= \Big( \left(x_{0},y_{0}\right) , \dots , \left(x_{t},y_{t}\right) \Big) .
\end{align}
\end{subequations}
\end{linenomath}
That is, $h_{t}^{T}$ is the action pair played at time $t$, and $h_{\leqslant t}^{T}$ is the ``sub-history" of $h^{T}$ until time $t\leqslant T$.

A pure strategy for player $X$ in the repeated game is a map, $\mathcal{H}\rightarrow S_{X}$, indicating an action in $S_{X}$ (deterministically) for each history leading up to the current encounter. More generally, $X$ could look at the history of past play and use this information to choose an action from $S_{X}$ probabilistically (rather than deterministically). A strategy of this form is known as a behavioral strategy \citep{fudenberg:MIT:1991}. In terms of $\mathcal{H}$, a behavioral strategy for player $X$ is a map
\begin{linenomath}
\begin{align}
\sigma_{X} &: \mathcal{H} \longrightarrow \Delta\left(S_{X}\right) ,
\end{align}
\end{linenomath}
where $\Delta\left(S_{X}\right)$ is the space of probability measures on $S_{X}$. An important type of behavioral strategy is a Markov strategy, which is a behavioral strategy, $\sigma_{X}$, that satisfies $\sigma_{X}\left[h^{T}\right] =\sigma_{X}\left[h^{T}_{T-1}\right]$.
That is, a Markov strategy depends on only the last pair of actions and not on the entire history of play. Note, however, that a Markov strategy may still depend on $t$. If $\sigma_{X}$ is a Markov strategy that does not depend on $t$, then we say that $\sigma_{X}$ is a stationary (or memory-one) strategy.

Suppose that $\sigma_{X}$ and $\sigma_{Y}$ are behavioral strategies for players $X$ and $Y$, respectively. Consider the map, $\sigma$, defined by the product measure,
\begin{linenomath}
\begin{align}\label{eq:kappaDefinition}
\sigma := \sigma_{X}\times\sigma_{Y} &: \mathcal{H} \longrightarrow \Delta\left(S_{X}\times S_{Y}\right) \nonumber \\
&: h^{t} \longmapsto \sigma_{X}\left[h^{t}\right]\times\sigma_{Y}\left[h^{t}\right] .
\end{align}
\end{linenomath}
By the Hahn-Kolmogorov theorem, for each $t\geqslant 0$ there exists a unique measure, $\mu_{t}$, on $\mathcal{H}^{t+1}$ such that for each $E'\in\mathcal{F}\left(\mathcal{H}^{t}\right)$ and $E\in\mathcal{F}\left(S_{X}\times S_{Y}\right)$,
\begin{linenomath}
\begin{align}
\mu_{t}\left(E'\times E\right) &= \int_{h^{t}\in E'} \sigma\left(h^{t},E\right) \,d\sigma\left(h_{\leqslant t-2}^{t},h_{t-1}^{t}\right)\cdots\,d\sigma\left(h_{\leqslant 0}^{t},h_{1}^{t}\right)\,d\sigma\left(\varnothing, h_{0}^{t}\right) ,
\end{align}
\end{linenomath}
where, for $h\in\mathcal{H}$ and $\mathbf{s}\in S_{X}\times S_{Y}$, $d\sigma\left(h,\mathbf{s}\right)$ denotes the differential of the measure $\sigma\left(h,\--\right)$ on $S_{X}\times S_{Y}$. In the case $t=0$, this measure is simply the product of the two initial actions, i.e. $\mu_{0}=\sigma_{X}\left[\varnothing\right]\times\sigma_{Y}\left[\varnothing\right]$. From these measures, we obtain a sequence of measures, $\left\{\nu_{t}\right\}_{t\geqslant 0}\subseteq\Delta\left(S_{X}\times S_{Y}\right)$, defined by
\begin{linenomath}
\begin{align}\label{eq:nuDefinition}
\nu_{t}\left(E\right) &:= \mu_{t}\left(\mathcal{H}^{t}\times E\right) .
\end{align}
\end{linenomath}
Informally, $\nu_{t}\left(E\right)$ is the probability that the action pair at time $t$ is in $E\subseteq S_{X}\times S_{Y}$, averaged over all histories that lead to $E$.
The sequences, $\left\{\mu_{t}\right\}_{t\geqslant 0}$ and $\left\{\nu_{t}\right\}_{t\geqslant 0}$, admit a convenient format for the expected payoffs, $\pi_{X}^{T}$ and $\pi_{Y}^{T}$, to players $X$ and $Y$, respectively. Before stating this result, we first formally define expected payoffs for the $\left(T+1\right)$-period game (where $T<\infty$):
\begin{definition}[Objective function for a finite game]
If $\sigma_{X}$ and $\sigma_{Y}$ are behavioral strategies for players $X$ and $Y$, respectively, and if $\sigma =\sigma_{X}\times\sigma_{Y}$ (see Eq. (\ref{eq:kappaDefinition})), then the objective function (or expected payoff) for player $X$ in the $\left(T+1\right)$-period game is
\begin{linenomath}
\begin{align}
\pi_{X}^{T} &:= \int_{h^{T+1}\in\mathcal{H}^{T+1}}\left[\frac{1-\lambda}{1-\lambda^{T+1}}\sum_{t=0}^{T}\lambda^{t}u_{X}\left(h_{t}^{T+1}\right)\right] \,d\sigma\left(h_{\leqslant T-1}^{T+1},h_{T}^{T+1}\right)\cdots\,d\sigma\left(h_{\leqslant 0}^{T+1},h_{1}^{T+1}\right)\,d\sigma\left(\varnothing, h_{0}^{T+1}\right) .
\end{align}
\end{linenomath}
\end{definition}

Using $\left\{\nu_{t}\right\}_{t\geqslant 0}$, we can write $\pi_{X}^{T}$ differently:
\begin{lemma}\label{lem:expectedPayoff}
For fixed $\sigma_{X}$ and $\sigma_{Y}$ generating $\left\{\nu_{t}\right\}_{t\geqslant 0}$, we have
\begin{linenomath}
\begin{align}
\pi_{X}^{T} &= \frac{1-\lambda}{1-\lambda^{T+1}}\sum_{t=0}^{T}\lambda^{t}\int_{\left(x,y\right)\in S_{X}\times S_{Y}} u_{X}\left(x,y\right) \,d\nu_{t}\left(x,y\right) .
\end{align}
\end{linenomath}
\end{lemma}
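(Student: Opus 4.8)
The plan is to begin from the definition of $\pi_{X}^{T}$ as a single integral over the history space $\mathcal{H}^{T+1}$ against the iterated measure generated by $\sigma$, and to recognize that this iterated measure is exactly $\mu_{T}$, the measure on $\mathcal{H}^{T+1}$ supplied by the Hahn-Kolmogorov theorem. Since $u_{X}$ is bounded and measurable and the inner sum is finite, linearity of the integral lets me pull the normalizing prefactor and the sum $\sum_{t=0}^{T}\lambda^{t}$ outside, leaving
\[
\pi_{X}^{T} = \frac{1-\lambda}{1-\lambda^{T+1}}\sum_{t=0}^{T}\lambda^{t}\int_{h^{T+1}\in\mathcal{H}^{T+1}} u_{X}\left(h_{t}^{T+1}\right)\,d\mu_{T}\left(h^{T+1}\right) .
\]
The whole task then reduces to identifying, for each fixed $t\leqslant T$, the integral $\int_{\mathcal{H}^{T+1}} u_{X}\left(h_{t}^{T+1}\right)\,d\mu_{T}$ with $\int_{S_{X}\times S_{Y}} u_{X}\,d\nu_{t}$.

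The key observation is that the integrand $u_{X}\left(h_{t}^{T+1}\right)$ depends on the history only through its $t$-th action pair $\left(x_{t},y_{t}\right)$, so I would integrate out the later coordinates first. Writing $\mu_{T}$ as the iterated integral against the kernels $\sigma\left(h,\--\right)$, the innermost integrations over the action pairs at times $t+1,\dots,T$ act on a function that is constant in those variables; because each $\sigma\left(h,\--\right)$ is a probability measure on $S_{X}\times S_{Y}$ of total mass $1$, these integrations each contribute a factor of $1$ and collapse. This reduces the integral over $\mathcal{H}^{T+1}$ to the integral of the same function over $\mathcal{H}^{t+1}$ against $\mu_{t}$, which is precisely the consistency (marginalization) property of the family $\left\{\mu_{t}\right\}$ as constructed: the marginal of $\mu_{T}$ on its first $t+1$ coordinates equals $\mu_{t}$. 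Formally this is a Fubini/Tonelli argument, valid because every function in sight is bounded and measurable and every factor is a probability kernel.

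Having reduced to $\int_{h^{t+1}\in\mathcal{H}^{t+1}} u_{X}\left(h_{t}^{t+1}\right)\,d\mu_{t}$, I would finish by invoking the definition of $\nu_{t}$. Since the integrand depends only on the last coordinate $\left(x_{t},y_{t}\right)\in S_{X}\times S_{Y}$, and since $\nu_{t}\left(E\right)=\mu_{t}\left(\mathcal{H}^{t}\times E\right)$ is by definition the pushforward of $\mu_{t}$ under the projection onto that last coordinate, the change-of-variables formula for pushforward measures gives
\[
\int_{h^{t+1}\in\mathcal{H}^{t+1}} u_{X}\left(h_{t}^{t+1}\right)\,d\mu_{t} = \int_{\left(x,y\right)\in S_{X}\times S_{Y}} u_{X}\left(x,y\right)\,d\nu_{t}\left(x,y\right) .
\]
Substituting this back into the sum over $t$ yields the claimed identity.

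I expect the main obstacle to be the marginalization step of the second paragraph: justifying rigorously that integrating out the actions at times $t+1,\dots,T$ collapses $\mu_{T}$ to $\mu_{t}$ on the initial segment. This rests on the Hahn-Kolmogorov construction of the $\left\{\mu_{t}\right\}$ together with the probability-kernel property that $\sigma\left(h,\--\right)$ has total mass $1$, and on Fubini's theorem applied to the iterated integral; the standing boundedness and measurability assumptions guarantee that these apply with no integrability difficulties. Everything else is linearity of the integral and the definition of $\nu_{t}$.
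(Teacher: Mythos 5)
Your proposal is correct and follows essentially the same route as the paper: linearity of the integral to pull out the sum, marginalization of $\mu_{T}$ to $\mu_{t}$ using the fact that each kernel $\sigma\left(h,\--\right)$ has total mass $1$, and then the identification of the last-coordinate marginal of $\mu_{t}$ with $\nu_{t}$. The only cosmetic difference is that where you invoke Fubini and the pushforward change-of-variables formula, the paper packages the same content into its Lemma~\ref{lem:technicalLemma} (proved by simple-function approximation).
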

As a consequence of Lemma \ref{lem:expectedPayoff}, we see that $\lim_{T\rightarrow\infty}\pi_{X}^{T}$ exists since $\nu_{t}$ is a probability measure and
\begin{linenomath}
\begin{align}
\left|\int_{\left(x,y\right)\in S_{X}\times S_{Y}} u_{X}\left(x,y\right) \,d\nu_{t}\left(x,y\right)\right| &\leqslant \sup_{\left(x,y\right)\in S_{X}\times S_{Y}}\left| u_{X}\left(x,y\right) \right| < \infty
\end{align}
\end{linenomath}
by the fact that $u_{X}$ is bounded. Thus, we define the objective function for an infinite game as follows:
\begin{definition}[Objective function for an infinite game]
If $\sigma_{X}$ and $\sigma_{Y}$ are behavioral strategies for players $X$ and $Y$, respectively, and if $\sigma =\sigma_{X}\times\sigma_{Y}$, then the objective function for player $X$ in the infinite game is
\begin{linenomath}
\begin{align}\label{eq:expectedPayoff}
\pi_{X} &:= \lim_{T\rightarrow\infty}\pi_{X}^{T} = \left(1-\lambda\right)\sum_{t=0}^{\infty}\lambda^{t}\int_{\left(x,y\right)\in S_{X}\times S_{Y}} u_{X}\left(x,y\right) \,d\nu_{t}\left(x,y\right) .
\end{align}
\end{linenomath}
\end{definition}

\begin{remark}
Classically, the objective function of a repeated game with infinitely many rounds is defined using a distribution over ``infinite histories," which is generated by the players' strategies for the repeated game \citep{fudenberg:MIT:1991}. That is, for $\mathcal{H}^{\infty}:=\prod_{t=0}^{\infty}S_{X}\times S_{Y}$ and some measure, $\mu\in\Delta\left(\mathcal{H}^{\infty}\right)$, the objective function of player $X$ is defined by
\begin{linenomath}
\begin{align}\label{eq:classicalPayoff}
\int_{h^{\infty}\in\mathcal{H}^{\infty}}\left(1-\lambda\right)\sum_{t=0}^{\infty}\lambda^{t}u_{X}\left(h_{t}^{\infty}\right) \, d\mu\left(h^{\infty}\right) .
\end{align}
\end{linenomath}
Using Eq. (\ref{eq:expectedPayoff}) as an objective function for player $X$, we do not need to worry about what $\mu$ is (or if it even exists for a general action space) since Eq. (\ref{eq:classicalPayoff}), whenever it is defined, must coincide with Eq. (\ref{eq:expectedPayoff}). To see why, suppose that there is a distribution, $\mu$, on $\mathcal{H}^{\infty}$ that satisfies
\begin{linenomath}
\begin{align}\label{eq:reductionOnCylinders}
\mu\left(E\times \left(S_{X}\times S_{Y}\right)\times\left(S_{X}\times S_{Y}\right)\times\cdots\right) &= \mu_{T}\left(E\right)
\end{align}
\end{linenomath}
for each $E\in\mathcal{F}\left(\mathcal{H}^{T+1}\right)$. Then, by the dominated convergence theorem, Eq. (\ref{eq:reductionOnCylinders}), and Eq. (\ref{eq:nuDefinition}),
\begin{linenomath}
\begin{align}
\int_{h^{\infty}\in\mathcal{H}^{\infty}} &\left(1-\lambda\right)\sum_{t=0}^{\infty}\lambda^{t}u_{X}\left(h_{t}^{\infty}\right) \, d\mu\left(h^{\infty}\right) \nonumber \\
&= \left(1-\lambda\right)\sum_{t=0}^{\infty}\lambda^{t}\int_{h^{\infty}\in\mathcal{H}^{\infty}} u_{X}\left(h_{t}^{\infty}\right) \, d\mu\left(h^{\infty}\right) \nonumber \\
&= \left(1-\lambda\right)\sum_{t=0}^{\infty}\lambda^{t}\int_{h^{t+1}\in\mathcal{H}^{t+1}} u_{X}\left(h_{t}^{t+1}\right) \, d\mu_{t}\left(h^{t+1}\right) \nonumber \\
&= \left(1-\lambda\right)\sum_{t=0}^{\infty}\lambda^{t}\int_{\left(x,y\right)\in S_{X}\times S_{Y}} u_{X}\left(x,y\right) \, d\nu_{t}\left(x,y\right) .
\end{align}
\end{linenomath}
Therefore, assuming Lemma \ref{lem:expectedPayoff}, the objective function for player $X$ defined by $\pi_{X}:=\lim_{T\rightarrow\infty}\pi_{X}^{T}$ is the same as the classical objective function for repeated games when the players' strategies produce a probability distribution over infinite histories. Typically, the existence of such a distribution depends on $S$ being finite or the measures in $\left\{\mu_{t}\right\}_{t\geqslant 0}$ being inner regular (which allows one to deduce the existence of $\mu$ from $\left\{\mu_{t}\right\}_{t\geqslant 0}$ using the Kolmogorov extension theorem). In practice, these assumptions are often not unreasonable, but with Lemma~\ref{lem:expectedPayoff}, we do not need to worry about the existence of such a distribution.
\end{remark}

In order to prove Lemma \ref{lem:expectedPayoff}, we first need a simple technical result:
\begin{lemma}\label{lem:technicalLemma}
Suppose that $\mathcal{X}$ and $\mathcal{Y}$ are measure spaces and $\sigma$ is a Markov kernel from $\mathcal{X}$ to $\mathcal{Y}$. Let $\mu$ be a probability measure on $\mathcal{X}$ and consider the measure on $\mathcal{Y}$ defined, for $E\in\mathcal{F}\left(\mathcal{Y}\right)$, by
\begin{linenomath}
\begin{align}
\nu\left(E\right) &:= \int_{x\in\mathcal{X}}\sigma\left(x,E\right)\,d\mu\left(x\right) .
\end{align}
\end{linenomath}
For any bounded, measurable function, $f:\mathcal{Y}\rightarrow\mathbb{R}$, we have
\begin{linenomath}
\begin{align}
\int_{y\in\mathcal{Y}} f\left(y\right) \, d\nu\left(y\right) &= \int_{x\in\mathcal{X}}\int_{y\in\mathcal{Y}}f\left(y\right)\,d\sigma\left(x,y\right)\,d\mu\left(x\right) ,
\end{align}
\end{linenomath}
where, for each $x\in\mathcal{X}$, $d\sigma\left(x,y\right)$ denotes the differential of the measure $\sigma\left(x,\--\right)$ on $\mathcal{Y}$.
\end{lemma}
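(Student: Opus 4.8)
The plan is to prove the identity by the standard measure-theoretic bootstrapping argument (the ``standard machine''): establish it first for indicator functions, then extend successively to simple functions, to nonnegative measurable functions, and finally to bounded measurable $f$.

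First I would verify the claim for $f=\mathbf{1}_{E}$ with $E\in\mathcal{F}(\mathcal{Y})$. In this case the left-hand side is simply $\int_{\mathcal{Y}}\mathbf{1}_{E}\,d\nu=\nu(E)$, while for each fixed $x$ the inner integral on the right collapses to $\int_{\mathcal{Y}}\mathbf{1}_{E}(y)\,d\sigma(x,y)=\sigma(x,E)$, so the right-hand side reads $\int_{\mathcal{X}}\sigma(x,E)\,d\mu(x)$. These two expressions agree by the very definition of $\nu$. By linearity of the integral in both its arguments, the identity then extends immediately to simple functions of the form $f=\sum_{i}c_{i}\mathbf{1}_{E_{i}}$.

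Next I would pass to nonnegative measurable $f$ by choosing an increasing sequence of simple functions $f_{n}\uparrow f$ and applying the monotone convergence theorem to both sides. On the left this is a single application with respect to the measure $\nu$. On the right it is applied twice: for each fixed $x$ the inner integrals $\int_{\mathcal{Y}}f_{n}\,d\sigma(x,\cdot)$ increase to $\int_{\mathcal{Y}}f\,d\sigma(x,\cdot)$, and this sequence, viewed as functions of $x$, then increases to the outer integrand, so a second application of monotone convergence in $x$ completes the step. Finally, for bounded measurable $f$ I would write $f=f^{+}-f^{-}$ and subtract the two nonnegative instances already established; boundedness guarantees that all four integrals involved are finite, so the subtraction is legitimate and no indeterminate differences arise.

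The one point to track carefully is measurability: for the outer integral on the right to be well defined at each stage, the map $x\mapsto\int_{\mathcal{Y}}f(y)\,d\sigma(x,y)$ must itself be measurable. At the level of indicators this is exactly the Markov-kernel hypothesis that $x\mapsto\sigma(x,E)$ is measurable for every fixed $E$; measurability is then preserved under finite linear combinations and under pointwise monotone limits, so it propagates automatically through each stage of the standard machine. Thus no genuinely new estimate is required---the entire content lies in organizing the limiting arguments so that this measurability is maintained throughout, and the boundedness of $f$ (and of the payoff functions $u_{X},u_{Y}$ to which the lemma will be applied) ensures finiteness at the last step.
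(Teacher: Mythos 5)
Your proof is correct, and it reaches the same destination by a slightly different road. Both you and the paper begin by verifying the identity for indicator functions, where it reduces to the definition of $\nu$, and then extend by linearity to simple functions. The divergence is in the limiting step: the paper exploits the boundedness of $f$ to choose simple functions $f_{n}\rightarrow f$ \emph{uniformly} on $\mathcal{Y}$, and passes to the limit on both sides by uniform convergence (the error in both iterated integrals is controlled by $\sup_{y}\left|f_{n}\left(y\right)-f\left(y\right)\right|$ because $\mu$ and each $\sigma\left(x,\--\right)$ are probability measures); this avoids any appeal to the monotone convergence theorem and any decomposition of $f$. You instead run the standard machine: monotone convergence twice on the right-hand side for nonnegative $f$, followed by the split $f=f^{+}-f^{-}$, with boundedness invoked only at the end to guarantee that the subtraction is finite. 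Your route is marginally longer but proves the identity for all nonnegative measurable $f$ (with values in $\left[0,\infty\right]$) along the way, and it has the merit of explicitly tracking the measurability of $x\mapsto\int_{y\in\mathcal{Y}}f\left(y\right)\,d\sigma\left(x,y\right)$ through each stage---a point the paper's proof leaves implicit. Either argument is acceptable.
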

\begin{proof}
Since $f$ is bounded, there exists a sequence of simple functions, $\left\{f_{n}\right\}_{n\geqslant 1}$, such that $f_{n}\rightarrow f$ uniformly on $\mathcal{Y}$. For each $n\geqslant 1$, let $f_{n}=\sum_{i=1}^{N_{n}}c_{i}^{n}\chi_{E_{i}^{n}}$ for some $c_{i}^{n}\in\mathbb{R}$ and $E_{i}^{n}\in\mathcal{F}\left(\mathcal{Y}\right)$, where $\chi_{E_{i}^{n}}$ is the characteristic function of $E_{i}^{n}$ (meaning $\chi_{E_{i}^{n}}\left(x\right) =1$ if $x\in E_{i}^{n}$ and $\chi_{E_{i}^{n}}\left(x\right) =0$ if $x\not\in E_{i}^{n}$). By uniform convergence,
\begin{linenomath}
\begin{align}
\int_{y\in\mathcal{Y}} f\left(y\right) \, d\nu\left(y\right) &= \lim_{n\rightarrow\infty}\int_{y\in\mathcal{Y}} f_{n}\left(y\right) \, d\nu\left(y\right) \nonumber \\
&= \lim_{n\rightarrow\infty} \sum_{i=1}^{N_{n}}c_{i}^{n}\nu\left(E_{i}^{n}\right) \nonumber \\
&= \lim_{n\rightarrow\infty} \sum_{i=1}^{N_{n}}c_{i}^{n}\int_{x\in\mathcal{X}}\sigma\left(x,E_{i}^{n}\right)\,d\mu\left(x\right) \nonumber \\
&= \lim_{n\rightarrow\infty} \int_{x\in\mathcal{X}}\sum_{i=1}^{N_{n}}c_{i}^{n}\sigma\left(x,E_{i}^{n}\right)\,d\mu\left(x\right) \nonumber \\
&= \lim_{n\rightarrow\infty} \int_{x\in\mathcal{X}}\int_{y\in\mathcal{Y}}f_{n}\left(y\right)\,d\sigma\left(x,y\right)\,d\mu\left(x\right) \nonumber \\
&= \int_{x\in\mathcal{X}}\int_{y\in\mathcal{Y}}f\left(y\right)\,d\sigma\left(x,y\right)\,d\mu\left(x\right) ,
\end{align}
\end{linenomath}
which completes the proof.
\end{proof}

\begin{proof}[Proof of Lemma \ref{lem:expectedPayoff}]
By Lemma \ref{lem:technicalLemma} and the definitions of $\mu_{t}$ and $\nu_{t}$,
\begin{linenomath}
\begin{align}
\pi_{X}^{T} &= \int_{h^{T+1}\in\mathcal{H}^{T+1}}\left[\frac{1-\lambda}{1-\lambda^{T+1}}\sum_{t=0}^{T}\lambda^{t}u_{X}\left(h_{t}^{T+1}\right)\right] \,d\sigma\left(h_{\leqslant T-1}^{T+1},h_{T}^{T+1}\right)\cdots\,d\sigma\left(h_{\leqslant 0}^{T+1},h_{1}^{T+1}\right)\,d\sigma\left(\varnothing, h_{0}^{T+1}\right) \nonumber \\
&= \frac{1-\lambda}{1-\lambda^{T+1}}\sum_{t=0}^{T}\lambda^{t}\int_{h^{T+1}\in\mathcal{H}^{T+1}} u_{X}\left(h_{t}^{T+1}\right) \,d\sigma\left(h_{\leqslant T-1}^{T+1},h_{T}^{T+1}\right)\cdots\,d\sigma\left(h_{\leqslant 0}^{T+1},h_{1}^{T+1}\right)\,d\sigma\left(\varnothing, h_{0}^{T+1}\right) \nonumber \\
&= \frac{1-\lambda}{1-\lambda^{T+1}}\sum_{t=0}^{T}\lambda^{t}\int_{h^{T+1}\in\mathcal{H}^{T+1}} u_{X}\left(h_{t}^{T+1}\right) \,d\mu_{T}\left(h^{T+1}\right) \nonumber \\
&= \frac{1-\lambda}{1-\lambda^{T+1}}\sum_{t=0}^{T}\lambda^{t}\int_{h^{t+1}\in\mathcal{H}^{t+1}} u_{X}\left(h_{t}^{t+1}\right) \,d\mu_{t}\left(h^{t+1}\right) \nonumber \\
&= \frac{1-\lambda}{1-\lambda^{T+1}}\sum_{t=0}^{T}\lambda^{t}\int_{h_{t}^{t+1}\in S_{X}\times S_{Y}} u_{X}\left(h_{t}^{t+1}\right) \,d\nu_{t}\left(h_{t}^{t+1}\right) \nonumber \\
&= \frac{1-\lambda}{1-\lambda^{T+1}}\sum_{t=0}^{T}\lambda^{t}\int_{\left(x,y\right)\in S_{X}\times S_{Y}} u_{X}\left(x,y\right) \,d\nu_{t}\left(x,y\right) ,
\end{align}
\end{linenomath}
which completes the proof.
\end{proof}

The objective function of Eq. (\ref{eq:expectedPayoff}), which is obtained using Lemma~\ref{lem:expectedPayoff}, eliminates the need to deal with histories when proving our main results for iterated games. With the background on expected payoffs now established, we turn our attention to the proofs of the results claimed in the main text:

\section{Detailed proofs of the main results}\label{subsec:detailedProofs}

Before proving our main results, we state a technical lemma that generalizes Lemma 3.1 of \citet{akin:Games:2015}--which \citet{hilbe:PNAS:2014} refer to as Akin's Lemma--and Lemma 1 of \citet{hilbe:GEB:2015}. This lemma relates the strategies of the two players, $\sigma_{X}$ and $\sigma_{Y}$, and the (discounted) sequence of play to the initial action of player $X$ when $\sigma_{X}$ is memory-one. Our proof of this lemma is essentially the same as theirs but in the broader setting of a measurable action space:
\begin{lemma}\label{lem:akinsLemma}
For any memory-one strategy, $\sigma_{X}$, and $E\in\mathcal{F}\left(S_{X}\right)$,
\begin{linenomath}
\begin{align}
\sum_{t=0}^{\infty}\lambda^{t}\int_{\left(x,y\right)\in S_{X}\times S_{Y}} \Big[ \chi_{E}\left(x\right) - \lambda\sigma_{X}\left[x,y\right]\left(E\right) \Big] \,d\nu_{t}\left(x,y\right) &= \sigma_{X}^{0}\left(E\right) ,
\end{align}
\end{linenomath}
where $\sigma_{X}^{0}$ is the initial action of player $X$.
\end{lemma}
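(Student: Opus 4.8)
The plan is to recognize the left-hand side as a telescoping series whose collapse is powered by the memory-one hypothesis on $\sigma_X$. For each $t\geqslant 0$ I would abbreviate
\begin{align*}
a_t &:= \int_{(x,y)\in S_X\times S_Y}\chi_E(x)\,d\nu_t(x,y) = \nu_t(E\times S_Y), & b_t &:= \int_{(x,y)\in S_X\times S_Y}\sigma_X[x,y](E)\,d\nu_t(x,y),
\end{align*}
so that the quantity to be evaluated is $\sum_{t\geqslant 0}\lambda^t(a_t-\lambda b_t)$. Because $0\leqslant\chi_E\leqslant 1$ and $0\leqslant\sigma_X[x,y](E)\leqslant 1$ while each $\nu_t$ is a probability measure, both $a_t$ and $b_t$ lie in $[0,1]$; hence the series converges absolutely and may be regrouped at will.

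The heart of the argument is the identity $b_t=a_{t+1}$, which expresses that the $\nu_t$-average of the one-step transition probability into $E$ equals the marginal probability that $X$'s action lies in $E$ one round later. To prove it I would first unwind the iterated-integral definition of $\mu_{t+1}$ to see that $\mu_{t+1}$ is $\mu_t$ followed by a single application of the kernel $\sigma=\sigma_X\times\sigma_Y$, so that for $E\in\mathcal{F}(S_X)$,
\begin{align*}
a_{t+1} = \nu_{t+1}(E\times S_Y) = \mu_{t+1}\big(\mathcal{H}^{t+1}\times(E\times S_Y)\big) = \int_{h^{t+1}\in\mathcal{H}^{t+1}}\sigma\big(h^{t+1},E\times S_Y\big)\,d\mu_t(h^{t+1}).
\end{align*}
Applying Lemma \ref{lem:technicalLemma} with base measure $\mu_t$, kernel $\sigma$, and test function $f(x',y')=\chi_E(x')$ reduces the inner integral to $\sigma_X[h^{t+1}](E)\cdot\sigma_Y[h^{t+1}](S_Y)=\sigma_X[h^{t+1}](E)$, the last factor being $1$ because $\sigma_Y[h^{t+1}]$ is a probability measure. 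The memory-one hypothesis then collapses $\sigma_X[h^{t+1}]=\sigma_X[h^{t+1}_t]=\sigma_X[x_t,y_t]$, so the integrand depends on the history only through its final action pair; marginalizing $\mu_t$ onto that last coordinate (which by definition is $\nu_t$) turns the $\mu_t$-integral into $\int\sigma_X[x,y](E)\,d\nu_t(x,y)=b_t$. It is worth noting that $\sigma_Y$ is never required to be memory-one: since I only track $X$'s marginal through $\chi_E(x)$, the factor $\sigma_Y[h^{t+1}](S_Y)$ is identically $1$ and all dependence on the opponent drops out.

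With $b_t=a_{t+1}$ established, the sum telescopes,
\begin{align*}
\sum_{t=0}^{\infty}\lambda^t(a_t-\lambda b_t) = \sum_{t=0}^{\infty}\lambda^t a_t - \sum_{t=0}^{\infty}\lambda^{t+1}a_{t+1} = \sum_{t=0}^{\infty}\lambda^t a_t - \sum_{s=1}^{\infty}\lambda^{s}a_{s} = a_0,
\end{align*}
and since $\nu_0=\sigma_X^0\times\sigma_Y^0$ is the product of the two initial actions, $a_0=\nu_0(E\times S_Y)=\sigma_X^0(E)\,\sigma_Y^0(S_Y)=\sigma_X^0(E)$, as claimed. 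I expect the main obstacle to be the rigorous justification of $b_t=a_{t+1}$: one must correctly identify $\mu_{t+1}$ as one kernel step beyond $\mu_t$, invoke Lemma \ref{lem:technicalLemma} at exactly that level, and use the memory-one property to reduce $\sigma_X[h^{t+1}]$ to a function of the final action pair so that the $\mu_t$-integral genuinely becomes a $\nu_t$-integral. Once this Markov step is in place, absolute convergence, the regrouping, and the telescoping are all routine.
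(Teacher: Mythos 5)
Your proposal is correct and follows essentially the same route as the paper: the key identity $\int_{(x,y)}\sigma_{X}\left[x,y\right]\left(E\right)\,d\nu_{t}\left(x,y\right)=\nu_{t+1}\left(E\times S_{Y}\right)$ (your $b_{t}=a_{t+1}$) followed by the telescoping of the geometric series down to $\nu_{0}\left(E\times S_{Y}\right)=\sigma_{X}^{0}\left(E\right)$. The only difference is one of detail: the paper asserts that identity ``by the definition of $\nu_{t}$,'' whereas you spell out the one-kernel-step argument via Lemma~\ref{lem:technicalLemma} and the memory-one reduction, which is a faithful elaboration rather than a different proof.
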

\begin{proof}
By the definition of $\nu_{t}$, we have
\begin{linenomath}
\begin{subequations}
\begin{align}
\int_{\left(x,y\right)\in S_{X}\times S_{Y}} \chi_{E}\left(x\right) \,d\nu_{t}\left(x,y\right) &= \nu_{t}\left(E\times S\right) ; \\
\int_{\left(x,y\right)\in S_{X}\times S_{Y}} \sigma_{X}\left[x,y\right]\left(E\right) \,d\nu_{t}\left(x,y\right) &= \nu_{t+1}\left(E\times S\right) .
\end{align}
\end{subequations}
\end{linenomath}
Therefore, since $\nu_{t}$ is a probability measure (in particular, at most $1$ on any measurable set) for each $t$,
\begin{linenomath}
\begin{align}
\sum_{t=0}^{\infty}\lambda^{t} &\int_{\left(x,y\right)\in S_{X}\times S_{Y}} \Big[ \chi_{E}\left(x\right) - \lambda\sigma_{X}\left[x,y\right]\left(E\right) \Big] \,d\nu_{t}\left(x,y\right) \nonumber \\
&= \sum_{t=0}^{\infty}\lambda^{t}\Big( \nu_{t}\left(E\times S\right) - \lambda\nu_{t+1}\left(E\times S\right) \Big) \nonumber \\
&= \nu_{0}\left(E\times S\right) - \lim_{t\rightarrow\infty}\lambda^{t+1}\nu_{t+1}\left(E\times S\right) \nonumber \\
&= \nu_{0}\left(E\times S\right) \nonumber \\
&= \sigma_{X}^{0}\left(E\right) ,
\end{align}
\end{linenomath}
which completes the proof.
\end{proof}

By the definitions of $\pi_{X}$ and $\pi_{Y}$, we have
\begin{linenomath}
\begin{align}
\alpha\pi_{X} + \beta\pi_{Y} + \gamma &= \left(1-\lambda\right)\sum_{t=0}^{\infty}\lambda^{t}\int_{\left(x,y\right)\in S_{X}\times S_{Y}} \Big[ \alpha u_{X}\left(x,y\right) + \beta u_{Y}\left(x,y\right) + \gamma \Big] \, d\nu_{t}\left(x,y\right) .
\end{align}
\end{linenomath}
Since our goal is to establish Theorem \ref{thm:mainTheorem}, which states that player $X$ can enforce the relation $\alpha\pi_{X} + \beta\pi_{Y} + \gamma =0$ using some $\sigma_{X}\left[x,y\right]$ and $\sigma_{X}^{0}$, as a first step we show that $\sum_{t=0}^{\infty}\lambda^{t}\int_{\left(x,y\right)\in S_{X}\times S_{Y}} \varphi\left(x,y\right) \, d\nu_{t}\left(x,y\right) =\int_{s\in S_{X}}\psi\left(s\right) \, d\sigma_{X}^{0}\left(s\right)$ for a particular choice of $\varphi\left(x,y\right)$. We then deduce Theorem \ref{thm:mainTheorem} by setting
\begin{linenomath}
\begin{align}
\alpha u_{X}\left(x,y\right) +\beta u_{Y}\left(x,y\right) +\gamma +\left(1-\lambda\right)\int_{s\in S_{X}}\psi\left(s\right)\,\sigma_{X}^{0}\left(s\right) =\varphi\left(x,y\right)
\end{align}
\end{linenomath}
for this known function, $\varphi$.
\begin{proposition}\label{prop:mainProposition}
If $\psi :S_{X}\rightarrow\mathbb{R}$ is a bounded, measurable function, then
\begin{linenomath}
\begin{align}\label{eq:mainPropEquation}
\sum_{t=0}^{\infty}\lambda^{t}\int_{\left(x,y\right)\in S_{X}\times S_{Y}} \left[ \psi\left(x\right) - \lambda\int_{s\in S_{X}}\psi\left(s\right)\,d\sigma_{X}\left[x,y\right]\left(s\right) \right] \, d\nu_{t}\left(x,y\right) &= \int_{s\in S_{X}}\psi\left(s\right) \, d\sigma_{X}^{0}\left(s\right) ,
\end{align}
\end{linenomath}
for any memory-one strategy, $\sigma_{X}$, where $\sigma_{X}^{0}$ is the initial action of player $X$.
\end{proposition}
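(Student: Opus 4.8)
The plan is to exploit the fact that both sides of \eqref{eq:mainPropEquation}, viewed as functionals of $\psi$, are linear, and that the identity has already been established for indicator functions in the guise of Akin's Lemma (Lemma \ref{lem:akinsLemma}). Concretely, when $\psi =\chi_{E}$ for some $E\in\mathcal{F}\left(S_{X}\right)$, the inner integral $\int_{s\in S_{X}}\psi\left(s\right)\,d\sigma_{X}\left[x,y\right]\left(s\right)$ collapses to $\sigma_{X}\left[x,y\right]\left(E\right)$ and the right-hand side $\int_{s\in S_{X}}\psi\left(s\right)\,d\sigma_{X}^{0}\left(s\right)$ collapses to $\sigma_{X}^{0}\left(E\right)$, so \eqref{eq:mainPropEquation} is precisely the assertion of Lemma \ref{lem:akinsLemma}. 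The strategy is therefore a standard escalation from indicators to simple functions to arbitrary bounded measurable functions, mirroring the approximation argument already used in the proof of Lemma \ref{lem:technicalLemma}.

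First I would record that both sides are linear in $\psi$: integration against each measure is linear in its integrand, and the series $\sum_{t\geqslant 0}\lambda^{t}\int\left(\cdots\right)\,d\nu_{t}$ converges absolutely because the bracketed integrand is bounded by $\left(1+\lambda\right)\sup_{s\in S_{X}}\left|\psi\left(s\right)\right|$ while each $\nu_{t}$ is a probability measure and $\sum_{t}\lambda^{t}=1/\left(1-\lambda\right)$. Linearity then upgrades the $\psi =\chi_{E}$ case to every simple function $\psi =\sum_{i=1}^{N}c_{i}\chi_{E_{i}}$ by summing the corresponding instances of Lemma \ref{lem:akinsLemma}.

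To reach an arbitrary bounded measurable $\psi$, I would choose simple functions $\psi_{n}\rightarrow\psi$ with $\varepsilon_{n}:=\sup_{s\in S_{X}}\left|\psi_{n}\left(s\right) -\psi\left(s\right)\right|\rightarrow 0$ (possible since $\psi$ is bounded) and pass to the limit, the crux being that both sides are continuous in the uniform norm. On the right, $\left|\int\psi_{n}\,d\sigma_{X}^{0}-\int\psi\,d\sigma_{X}^{0}\right|\leqslant\varepsilon_{n}$ because $\sigma_{X}^{0}$ is a probability measure. On the left, since $\sigma_{X}\left[x,y\right]$ is a probability measure for every $\left(x,y\right)$, the bracketed integrands for $\psi_{n}$ and $\psi$ differ by at most $\left(1+\lambda\right)\varepsilon_{n}$ uniformly in $\left(x,y\right)$; integrating against the probability measures $\nu_{t}$ and summing with weights $\lambda^{t}$ yields a total discrepancy of at most $\frac{1+\lambda}{1-\lambda}\varepsilon_{n}\rightarrow 0$. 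Since the two sides coincide for every $\psi_{n}$ and both converge, they coincide for $\psi$, which is the claim.

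The step I expect to be the main obstacle is this last uniform control of the infinite sum over $t$: one must ensure that the approximation error $\varepsilon_{n}$ does not accumulate across the infinitely many time steps. This is exactly where the discounting is essential—the geometric weights $\lambda^{t}$ together with the fact that every relevant measure ($\sigma_{X}\left[x,y\right]$, $\sigma_{X}^{0}$, and each $\nu_{t}$) is a probability measure make integration a contraction in the sup-norm and render the series a convergent geometric majorant, so the single bound $\frac{1+\lambda}{1-\lambda}\varepsilon_{n}$ controls the entire left-hand side at once.
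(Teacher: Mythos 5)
Your proposal is correct and follows essentially the same route as the paper: approximate the bounded $\psi$ uniformly by simple functions, reduce to indicator functions where the identity is exactly Lemma \ref{lem:akinsLemma}, and pass to the limit. The only cosmetic difference is that the paper justifies the interchange of limit and sum by invoking dominated convergence together with uniform convergence, whereas you make the same step explicit via the geometric error bound $\tfrac{1+\lambda}{1-\lambda}\varepsilon_{n}$.
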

\begin{proof}
Since $\psi$ is bounded, there exists a sequence of simple functions, $\left\{\psi_{n}\right\}_{n\geqslant 1}$, such that $\psi_{n}\rightarrow\psi$ uniformly on $S$. For each $n\geqslant 1$, let $\psi_{n}=\sum_{i=1}^{N_{n}}c_{i}^{n}\chi_{E_{i}^{n}}$. Using the uniform convergence of this sequence, together with the dominated convergence theorem and Lemma \ref{lem:akinsLemma}, we obtain
\begin{linenomath}
\begin{align}
\sum_{t=0}^{\infty} &\lambda^{t}\int_{\left(x,y\right)\in S_{X}\times S_{Y}} \left[ \psi\left(x\right) - \lambda\int_{s\in S_{X}}\psi\left(s\right)\,d\sigma_{X}\left[x,y\right]\left(s\right) \right] \, d\nu_{t}\left(x,y\right) \nonumber \\
&= \sum_{t=0}^{\infty}\lambda^{t} \lim_{n\rightarrow\infty} \int_{\left(x,y\right)\in S_{X}\times S_{Y}} \left[ \psi_{n}\left(x\right) - \lambda\int_{s\in S_{X}}\psi_{n}\left(s\right)\,d\sigma_{X}\left[x,y\right]\left(s\right) \right] \, d\nu_{t}\left(x,y\right) \nonumber \\
&= \lim_{n\rightarrow\infty} \sum_{t=0}^{\infty}\lambda^{t} \int_{\left(x,y\right)\in S_{X}\times S_{Y}} \left[ \psi_{n}\left(x\right) - \lambda\int_{s\in S_{X}}\psi_{n}\left(s\right)\,d\sigma_{X}\left[x,y\right]\left(s\right) \right] \, d\nu_{t}\left(x,y\right) \nonumber \\
&= \lim_{n\rightarrow\infty} \sum_{t=0}^{\infty}\lambda^{t}\sum_{i=1}^{N_{n}}c_{i}^{n}\int_{\left(x,y\right)\in S_{X}\times S_{Y}} \left[ \chi_{E_{i}^{n}}\left(x\right) - \lambda\int_{s\in S_{X}}\chi_{E_{i}^{n}}\left(s\right)\,d\sigma_{X}\left[x,y\right]\left(s\right) \right] \, d\nu_{t}\left(x,y\right) \nonumber \\
&= \lim_{n\rightarrow\infty} \sum_{t=0}^{\infty}\lambda^{t}\sum_{i=1}^{N_{n}}c_{i}^{n}\int_{\left(x,y\right)\in S_{X}\times S_{Y}} \Big[ \chi_{E_{i}^{n}}\left(x\right) - \lambda\sigma_{X}\left[x,y\right]\left(E_{i}^{n}\right) \Big] \, d\nu_{t}\left(x,y\right) \nonumber \\
&= \lim_{n\rightarrow\infty} \sum_{i=1}^{N_{n}}c_{i}^{n}\sum_{t=0}^{\infty}\lambda^{t}\int_{\left(x,y\right)\in S_{X}\times S_{Y}} \Big[ \chi_{E_{i}^{n}}\left(x\right) - \lambda\sigma_{X}\left[x,y\right]\left(E_{i}^{n}\right) \Big] \, d\nu_{t}\left(x,y\right) \nonumber \\
&= \lim_{n\rightarrow\infty} \sum_{i=1}^{N_{n}}c_{i}^{n}\sigma_{X}^{0}\left(E_{i}^{n}\right) \nonumber \\
&= \lim_{n\rightarrow\infty} \int_{s\in S_{X}}\psi_{n}\left(s\right)\,d\sigma_{X}^{0}\left(s\right) \nonumber \\
&= \int_{s\in S_{X}}\psi\left(s\right)\,d\sigma_{X}^{0}\left(s\right) ,
\end{align}
\end{linenomath}
which completes the proof.
\end{proof}

While Proposition \ref{prop:mainProposition} applies to discounted games with $\lambda <1$, we can get an analogous statement for undiscounted games by multiplying both sides of Eq. (\ref{eq:mainPropEquation}) by $1-\lambda$ and taking the limit $\lambda\rightarrow 1^{-}$:
\begin{corollary}
If $\psi:S_{X}\rightarrow\mathbb{R}$ is a bounded, measurable function, then, when the limit exists,
\begin{linenomath}
\begin{align}
\lim_{T\rightarrow\infty}\frac{1}{T+1}\sum_{t=0}^{T}\int_{\left(x,y\right)\in S_{X}\times S_{Y}} \left[ \psi\left(x\right) - \int_{s\in S_{X}}\psi\left(s\right)\,d\sigma_{X}\left[x,y\right]\left(s\right) \right] \, d\nu_{t}\left(x,y\right) &= 0
\end{align}
\end{linenomath}
for any memory-one strategy, $\sigma_{X}$, where $\sigma_{X}^{0}$ is the initial action of player $X$.
\end{corollary}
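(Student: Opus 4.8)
The plan is to obtain the undiscounted statement directly from the exact identity \eq{mainPropEquation} of Proposition \ref{prop:mainProposition} by multiplying through by $1-\lambda$ and letting $\lambda\to1^{-}$, invoking the classical Abelian correspondence between Abel means and Ces\`{a}ro means of a bounded sequence \citep{korevaar:S:2004}. Throughout, the boundedness of $\psi$ (say $\left|\psi\right|\leqslant M$) together with the fact that each $\nu_{t}$ is a probability measure guarantees that the relevant sequences are uniformly bounded, which is what tames every limit that appears.

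First I would record the two bounded sequences
\begin{align}
P_{t} := \int_{\left(x,y\right)\in S_{X}\times S_{Y}} \psi\left(x\right)\,d\nu_{t}\left(x,y\right), \qquad Q_{t} := \int_{\left(x,y\right)\in S_{X}\times S_{Y}} \left[\int_{s\in S_{X}}\psi\left(s\right)\,d\sigma_{X}\left[x,y\right]\left(s\right)\right]d\nu_{t}\left(x,y\right),
\end{align}
which satisfy $\left|P_{t}\right|\leqslant M$ and $\left|Q_{t}\right|\leqslant M$. In this notation the integrand in \eq{mainPropEquation} is $P_{t}-\lambda Q_{t}$, so Proposition \ref{prop:mainProposition} reads $\sum_{t\geqslant0}\lambda^{t}\left(P_{t}-\lambda Q_{t}\right)=P_{0}$, a value independent of $\lambda$ because the $S_{X}$-marginal of $\nu_{0}$ is $\sigma_{X}^{0}$. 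Multiplying by $1-\lambda$ therefore gives $\left(1-\lambda\right)\sum_{t\geqslant0}\lambda^{t}\left(P_{t}-\lambda Q_{t}\right)=\left(1-\lambda\right)P_{0}\to0$.

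The quantity that actually governs the corollary is the Abel mean $\left(1-\lambda\right)\sum_{t\geqslant0}\lambda^{t}\left(P_{t}-Q_{t}\right)$, whose integrand is obtained from \eq{mainPropEquation} by replacing the inner factor of $\lambda$ with $1$. Its difference from the previous Abel mean is
\begin{align}
\left(1-\lambda\right)\sum_{t\geqslant0}\lambda^{t}\left(P_{t}-Q_{t}\right)-\left(1-\lambda\right)\sum_{t\geqslant0}\lambda^{t}\left(P_{t}-\lambda Q_{t}\right) = -\left(1-\lambda\right)^{2}\sum_{t\geqslant0}\lambda^{t}Q_{t},
\end{align}
and the right-hand side is bounded in modulus by $\left(1-\lambda\right)^{2}M/\left(1-\lambda\right)=\left(1-\lambda\right)M\to0$. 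Combining the two displays yields $\lim_{\lambda\to1^{-}}\left(1-\lambda\right)\sum_{t\geqslant0}\lambda^{t}\left(P_{t}-Q_{t}\right)=0$. Finally, under the stated hypothesis that the Ces\`{a}ro limit $\lim_{T\to\infty}\frac{1}{T+1}\sum_{t=0}^{T}\left(P_{t}-Q_{t}\right)$ exists, the Abelian theorem \citep{korevaar:S:2004} forces it to coincide with this Abel limit, so it equals $0$, which is the claimed identity.

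The main obstacle is the mismatch between the two integrands: Proposition \ref{prop:mainProposition} carries an explicit factor of $\lambda$ that the corollary drops, and the whole argument hinges on recognizing that this discrepancy is itself an Abel mean carrying a spare factor of $1-\lambda$, so that uniform boundedness of $Q_{t}$ annihilates it in the limit. I would also note that one can sidestep the Abelian theorem entirely: the simple-function approximation used to prove Proposition \ref{prop:mainProposition} shows $Q_{t}$ equals the integral of $\psi$ against the $S_{X}$-marginal of $\nu_{t+1}$, so $P_{t}-Q_{t}$ telescopes and the Ces\`{a}ro partial sum collapses to $\left(P_{0}-P_{T+1}\right)/\left(T+1\right)=O\left(1/T\right)$, which shows that the limit in fact always exists and equals $0$.
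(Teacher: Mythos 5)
Your proposal is correct and follows the same route the paper sketches for this corollary: multiply Eq.~(\ref{eq:mainPropEquation}) by $1-\lambda$, let $\lambda\to1^{-}$, and pass from the Abel mean to the Ces\`{a}ro mean via the Abelian theorem of \citet{korevaar:S:2004}, with your bound $\left(1-\lambda\right)^{2}\sum_{t\geqslant0}\lambda^{t}Q_{t}\to0$ correctly disposing of the discrepancy between the inner factor $\lambda$ in the proposition and the factor $1$ in the corollary. Your closing observation that $Q_{t}=P_{t+1}$, so that the Ces\`{a}ro sum telescopes to $\left(P_{0}-P_{T+1}\right)/\left(T+1\right)$ and the limit in fact always exists, is a valid and more elementary strengthening that the paper does not record.
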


\begin{theorem}[Autocratic strategies in arbitrary action spaces]
Suppose that $\sigma_{X}\left[x,y\right]$ is a memory-one strategy for player $X$ and let $\sigma_{X}^{0}$ be player $X$'s initial action. If, for some bounded function, $\psi$, the equation
\begin{linenomath}
\begin{align}\label{eq:mainSIEquation}
\alpha u_{X}\left(x,y\right) + \beta u_{Y}\left(x,y\right) + \gamma &= \psi\left(x\right) - \lambda\int_{s\in S_{X}}\psi\left(s\right)\,d\sigma_{X}\left[x,y\right]\left(s\right) - \left(1-\lambda\right)\int_{s\in S_{X}}\psi\left(s\right)\,d\sigma_{X}^{0}\left(s\right)
\end{align}
\end{linenomath}
holds for each $x\in S_{X}$ and $y\in S_{Y}$, then $\sigma_{X}^{0}$ and $\sigma_{X}\left[x,y\right]$ together enforce the linear payoff relationship
\begin{linenomath}
\begin{align}
\alpha\pi_{X}+\beta\pi_{Y}+\gamma &= 0
\end{align}
\end{linenomath}
for \emph{any} strategy of player $Y$. In other words, the pair $\Big(\sigma_{X}^{0},\sigma_{X}\left[x,y\right]\Big)$ is an autocratic strategy.
\end{theorem}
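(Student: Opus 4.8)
The plan is to reduce the theorem directly to Proposition \ref{prop:mainProposition}, which already packages the essential dynamical content (via the measure-theoretic form of Akin's Lemma). First I would expand the left-hand side using the definitions of the objective functions in Eq.~(\ref{eq:expectedPayoff}): by linearity of the integral against each $\nu_{t}$ and of the discounted sum,
\begin{align}
\alpha\pi_{X}+\beta\pi_{Y}+\gamma &= \left(1-\lambda\right)\sum_{t=0}^{\infty}\lambda^{t}\int_{\left(x,y\right)\in S_{X}\times S_{Y}}\Big[\alpha u_{X}\left(x,y\right)+\beta u_{Y}\left(x,y\right)+\gamma\Big]\,d\nu_{t}\left(x,y\right) .
\end{align}
I would then substitute the hypothesis, Eq.~(\ref{eq:mainSIEquation}), replacing the integrand $\alpha u_{X}+\beta u_{Y}+\gamma$ by the Press-Dyson expression $\psi\left(x\right)-\lambda\int_{S_{X}}\psi\,d\sigma_{X}\left[x,y\right]-\left(1-\lambda\right)\int_{S_{X}}\psi\,d\sigma_{X}^{0}$.

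The key observation is that the final summand, $\left(1-\lambda\right)\int_{S_{X}}\psi\left(s\right)\,d\sigma_{X}^{0}\left(s\right)$, is a constant independent of both $\left(x,y\right)$ and $t$, so I can split the sum-integral into two pieces. The first piece,
\begin{align}
\left(1-\lambda\right)\sum_{t=0}^{\infty}\lambda^{t}\int_{\left(x,y\right)\in S_{X}\times S_{Y}}\left[\psi\left(x\right)-\lambda\int_{s\in S_{X}}\psi\left(s\right)\,d\sigma_{X}\left[x,y\right]\left(s\right)\right]d\nu_{t}\left(x,y\right) ,
\end{align}
is exactly $\left(1-\lambda\right)$ times the left-hand side of Proposition \ref{prop:mainProposition}, hence equals $+\left(1-\lambda\right)\int_{S_{X}}\psi\left(s\right)\,d\sigma_{X}^{0}\left(s\right)$. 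For the second piece I would use that each $\nu_{t}$ is a probability measure, so its total mass is $1$, together with $\sum_{t=0}^{\infty}\lambda^{t}=1/\left(1-\lambda\right)$; pulling the constant out of every integral, this piece collapses to $-\left(1-\lambda\right)\cdot\tfrac{1}{1-\lambda}\cdot\left(1-\lambda\right)\int_{S_{X}}\psi\,d\sigma_{X}^{0}=-\left(1-\lambda\right)\int_{S_{X}}\psi\,d\sigma_{X}^{0}$. The two pieces carry opposite signs and cancel, yielding $\alpha\pi_{X}+\beta\pi_{Y}+\gamma=0$. Since nothing in the argument constrains player $Y$, this holds against any strategy of $Y$, which is the autocratic conclusion.

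I expect no serious obstacle at this level, because the genuine work is already discharged in Lemma \ref{lem:akinsLemma} and its consequence, Proposition \ref{prop:mainProposition}; the theorem is essentially a bookkeeping corollary. The only points requiring care are routine: I must justify splitting the infinite series into the two pieces and interchanging it with the finite linear combination of integrals, which is legitimate because $\psi$ and the payoffs are bounded and each $\nu_{t}$ is a probability measure, so every integrand is uniformly bounded and the geometric weighting $\lambda^{t}$ renders the series absolutely convergent (so the dominated convergence theorem applies term by term). I would also verify the arithmetic of the cancellation with some care, since the factor $\left(1-\lambda\right)$ occurs both in the normalization of the payoffs in Eq.~(\ref{eq:expectedPayoff}) and in the constant initial-action term of Eq.~(\ref{eq:mainSIEquation}); it is precisely the identity $\sum_{t}\lambda^{t}=1/\left(1-\lambda\right)$ evaluated against probability measures that makes the constant term reproduce the Proposition's right-hand side and exactly cancel it.
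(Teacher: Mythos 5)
Your proposal is correct and follows essentially the same route as the paper: both reduce the theorem to Proposition \ref{prop:mainProposition} by substituting Eq.~(\ref{eq:mainSIEquation}) into the $\nu_{t}$-representation of $\alpha\pi_{X}+\beta\pi_{Y}+\gamma$ and observing that the constant initial-action term, summed against the geometric weights, exactly cancels the Proposition's right-hand side. The paper merely phrases the cancellation by moving the constant to the left-hand side rather than splitting the series, which is the same bookkeeping.
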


\begin{proof}
If Eq. (\ref{eq:mainSIEquation}) holds, then by Eq. (\ref{eq:mainPropEquation}) in Proposition \ref{prop:mainProposition},
\begin{linenomath}
\begin{align}
\alpha\pi_{X}+&\beta\pi_{Y}+\gamma +\left(1-\lambda\right)\int_{s\in S_{X}}\psi\left(s\right)\,\sigma_{X}^{0}\left(s\right) \nonumber \\
&= \left(1-\lambda\right)\sum_{t=0}^{\infty}\lambda^{t}\int_{\left(x,y\right)\in S_{X}\times S_{Y}} \left[ \psi\left(x\right) - \lambda\int_{s\in S_{X}}\psi\left(s\right)\,d\sigma_{X}\left[x,y\right]\left(s\right) \right] \, d\nu_{t}\left(x,y\right) \nonumber \\
&= \left(1-\lambda\right)\int_{s\in S_{X}}\psi\left(s\right) \, d\sigma_{X}^{0}\left(s\right) ,
\end{align}
\end{linenomath}
and it follows at once that $\alpha\pi_{X}+\beta\pi_{Y}+\gamma =0$.
\end{proof}

\subsection{Two-point autocratic strategies}

\begin{corollary}\label{cor:mainCorollary}
Let $\alpha, \beta ,\gamma\in\mathbb{R}$ and suppose that there exist $s_{1},s_{2}\in S_{X}$ (i.e. two discrete actions) and $\phi >0$ such that
\begin{linenomath}
\begin{subequations}\label{eq:corollaryEquation}
\begin{align}
-\frac{1-\left(1-\lambda\right) p_{0}}{\phi} &\leqslant \alpha u_{X}\left(s_{1},y\right) + \beta u_{Y}\left(s_{1},y\right) + \gamma \leqslant -\frac{\left(1-\lambda\right)\left(1-p_{0}\right)}{\phi} ; \\
\frac{\left(1-\lambda\right) p_{0}}{\phi} &\leqslant \alpha u_{X}\left(s_{2},y\right) + \beta u_{Y}\left(s_{2},y\right) + \gamma \leqslant \frac{\lambda +\left(1-\lambda\right) p_{0}}{\phi}
\end{align}
\end{subequations}
\end{linenomath}
for each $y\in S_{Y}$, where $p_{0}$ is the probability that $X$ initially uses $s_{1}$ and $1-p_{0}$ is the probability that $X$ initially uses $s_{2}$. Let $\delta_{s}$ be the Dirac measure on $S$ centered at $s$, and, for $x\in\left\{s_{1},s_{2}\right\}$ and $y\in S_{Y}$, consider the memory-one strategy,
\begin{linenomath}
\begin{align}
\sigma_{X}\left[x,y\right] &:= p\left(x,y\right)\delta_{s_{1}} + \Big(1-p\left(x,y\right)\Big)\delta_{s_{2}} ,
\end{align}
\end{linenomath}
where
\begin{linenomath}
\begin{subequations}
\begin{align}
p\left(s_{1},y\right) &:= \frac{1}{\lambda}\Big(\phi\left(\alpha u_{X}\left(s_{1},y\right) +\beta u_{Y}\left(s_{1},y\right) +\gamma\right) -\left(1-\lambda\right) p_{0}+1\Big) ; \\
p\left(s_{2},y\right) &:= \frac{1}{\lambda}\Big(\phi\left(\alpha u_{X}\left(s_{2},y\right) +\beta u_{Y}\left(s_{2},y\right) +\gamma\right) -\left(1-\lambda\right) p_{0}\Big) .
\end{align}
\end{subequations}
\end{linenomath}
Then, irrespective of player $Y$'s strategy, this choice of $p_{0}$ and $\sigma_{X}\left[x,y\right]$ enforces $\alpha\pi_{X}+\beta\pi_{Y}+\gamma =0$.
\end{corollary}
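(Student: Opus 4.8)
The plan is to deduce the corollary from Theorem~\ref{thm:mainTheorem} by exhibiting a suitable scaling function $\psi$. Because both $\sigma_{X}\left[x,y\right]$ and the initial action $\sigma_{X}^{0}$ are concentrated on the two points $s_{1}$ and $s_{2}$, the only values of $\psi$ that ever enter Eq.~(\ref{eq:mainSIEquation}) are $\psi\left(s_{1}\right)$ and $\psi\left(s_{2}\right)$. I would therefore set $\psi\left(s_{1}\right) :=-1/\phi$ and $\psi\left(s_{2}\right) :=0$, extending $\psi$ by an arbitrary bounded, measurable value (say $0$) elsewhere on $S_{X}$. With this choice the two integrals on the right-hand side of Eq.~(\ref{eq:mainSIEquation}) collapse to $\int_{S_{X}}\psi\,d\sigma_{X}\left[x,y\right] =-p\left(x,y\right) /\phi$ and $\int_{S_{X}}\psi\,d\sigma_{X}^{0} =-p_{0}/\phi$, so that the whole right-hand side reduces to $\psi\left(x\right) +\lambda p\left(x,y\right) /\phi +\left(1-\lambda\right) p_{0}/\phi$.

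First I would confirm feasibility, i.e.\ that $\sigma_{X}\left[x,y\right]$ is a genuine probability measure. This holds exactly when $0\leqslant p\left(s_{1},y\right) ,p\left(s_{2},y\right)\leqslant 1$ for every $y$. Substituting the defining formulas for $p\left(s_{1},y\right)$ and $p\left(s_{2},y\right)$ and solving the two constraints $0\leqslant p\leqslant 1$ for $\alpha u_{X}+\beta u_{Y}+\gamma$ (using $\phi>0$ to preserve the inequality directions) returns precisely the two chains of inequalities assumed in Eq.~(\ref{eq:corollaryEquation}). Thus the hypotheses of the corollary are nothing more than this feasibility condition, and no separate verification beyond the rearrangement is needed.

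The main computation is to check that Eq.~(\ref{eq:mainSIEquation}) actually holds at $x=s_{1}$ and $x=s_{2}$ for all $y$. Abbreviating $M\left(x,y\right) :=\alpha u_{X}\left(x,y\right) +\beta u_{Y}\left(x,y\right) +\gamma$, the definitions give $\lambda p\left(s_{2},y\right) =\phi M\left(s_{2},y\right) -\left(1-\lambda\right) p_{0}$ and $\lambda p\left(s_{1},y\right) =\phi M\left(s_{1},y\right) -\left(1-\lambda\right) p_{0}+1$. Feeding these into the reduced right-hand side above, the $-\left(1-\lambda\right) p_{0}/\phi$ produced by $\lambda p\left(x,y\right) /\phi$ cancels the initial-action term $+\left(1-\lambda\right) p_{0}/\phi$; for $x=s_{1}$ the residual $+1/\phi$ then cancels against $\psi\left(s_{1}\right) =-1/\phi$, and for $x=s_{2}$ there is no residual and $\psi\left(s_{2}\right) =0$. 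In both cases the right-hand side collapses to exactly $M\left(x,y\right)$, so Eq.~(\ref{eq:mainSIEquation}) is satisfied on $\left\{s_{1},s_{2}\right\}$.

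The step I expect to require the most care is reconciling this with Theorem~\ref{thm:mainTheorem}, which nominally asks Eq.~(\ref{eq:mainSIEquation}) to hold for \emph{every} $x\in S_{X}$, whereas I will have verified it only on $\left\{s_{1},s_{2}\right\}$ (and it cannot be expected to hold elsewhere, since $u_{X}$ is unconstrained off the support). The resolution is that $X$ never plays any action outside $\left\{s_{1},s_{2}\right\}$, so each $\nu_{t}$ is supported on $\left\{s_{1},s_{2}\right\}\times S_{Y}$ and the identity is only needed $\nu_{t}$-almost everywhere. Rather than appeal to the theorem as a black box, I would invoke Proposition~\ref{prop:mainProposition} directly---it holds for this memory-one $\sigma_{X}$ with no restriction---and substitute the identity $\psi\left(x\right) -\lambda\int_{S_{X}}\psi\,d\sigma_{X}\left[x,y\right] =M\left(x,y\right) +\left(1-\lambda\right)\int_{S_{X}}\psi\,d\sigma_{X}^{0}$, which is valid on the support of every $\nu_{t}$, into Eq.~(\ref{eq:mainPropEquation}). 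Mirroring the proof of Theorem~\ref{thm:mainTheorem} and using $\left(1-\lambda\right)\sum_{t}\lambda^{t}=1$ together with the fact that each $\nu_{t}$ is a probability measure then yields $\alpha\pi_{X}+\beta\pi_{Y}+\gamma =0$, completing the argument.
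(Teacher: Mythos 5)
Your proposal is correct and follows essentially the same route as the paper: you choose $\psi$ on $\left\{s_{1},s_{2}\right\}$ with $\psi\left(s_{2}\right)-\psi\left(s_{1}\right)=1/\phi$ (the paper takes $\psi\left(s_{1}\right)=\psi_{1}$, $\psi\left(s_{2}\right)=\psi_{1}+1/\phi$; your choice is the case $\psi_{1}=-1/\phi$), verify Eq.~(\ref{eq:mainEquation}) at $x=s_{1},s_{2}$, and observe that the hypotheses Eq.~(\ref{eq:corollaryEquation}) are exactly the feasibility conditions $0\leqslant p\left(x,y\right)\leqslant 1$. Your handling of the ``for all $x\in S_{X}$'' issue via Proposition~\ref{prop:mainProposition} and the support of $\nu_{t}$ is a slightly more explicit version of the paper's shortcut of simply assuming $S_{X}=\left\{s_{1},s_{2}\right\}$, but it is the same idea.
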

\begin{proof}
By Theorem \ref{thm:mainTheorem}, we need only show that there exists $\psi :\left\{s_{1},s_{2}\right\}\rightarrow\mathbb{R}$ such that
\begin{linenomath}
\begin{align}
\alpha u_{X}\left(x,y\right) + \beta u_{Y}\left(x,y\right) + \gamma &= \psi\left(x\right) - \lambda\int_{s\in S_{X}}\psi\left(s\right)\,d\sigma_{X}\left[x,y\right]\left(s\right) -\left(1-\lambda\right)\int_{s\in S_{X}}\psi\left(s\right)\,d\sigma_{X}^{0}\left(s\right)
\end{align}
\end{linenomath}
for each $x\in S_{X}$ and $y\in S_{Y}$ in order to establish the equation $\alpha\pi_{X}+\beta\pi_{Y}+\gamma =0$. Indeed, since we are restricting $X$'s actions to two points, we may assume that $S_{X}=\left\{s_{1},s_{2}\right\}$. Fix $\psi\left(s_{1}\right) :=\psi_{1}\in\mathbb{R}$ and let $\psi\left(s_{2}\right) :=\frac{1}{\phi} +\psi_{1}$. Since
\begin{linenomath}
\begin{align}
\psi\left(s_{1}\right) - &\lambda\int_{s\in S_{X}}\psi\left(s\right)\,d\sigma_{X}\left[s_{1},y\right]\left(s\right) - \left(1-\lambda\right)\int_{s\in S_{X}}\psi\left(s\right)\,d\sigma_{X}^{0}\left(s\right) \nonumber \\
&= \psi_{1} - \lambda\left(\psi_{1}+\left(1-p\left(s_{1},y\right)\right)\frac{1}{\phi}\right) - \left(1-\lambda\right)\left(\psi_{1}+\left(1-p_{0}\right)\frac{1}{\phi}\right) \nonumber \\
&= \alpha u_{X}\left(s_{1},y\right) + \beta u_{Y}\left(s_{1},y\right) + \gamma
\end{align}
\end{linenomath}
and
\begin{linenomath}
\begin{align}
\psi\left(s_{2}\right) - &\lambda\int_{s\in S_{X}}\psi\left(s\right)\,d\sigma_{X}\left[s_{2},y\right]\left(s\right) - \left(1-\lambda\right)\int_{s\in S_{X}}\psi\left(s\right)\,d\sigma_{X}^{0}\left(s\right) \nonumber \\
&= \frac{1}{\phi} + \psi_{1} - \lambda\left(\psi_{1}+\left(1-p\left(s_{2},y\right)\right)\frac{1}{\phi}\right) - \left(1-\lambda\right)\left(\psi_{1}+\left(1-p_{0}\right)\frac{1}{\phi}\right) \nonumber \\
&= \alpha u_{X}\left(s_{2},y\right) + \beta u_{Y}\left(s_{2},y\right) + \gamma ,
\end{align}
\end{linenomath}
and since $0\leqslant p\left(x,y\right)\leqslant 1$ for each $x\in\left\{s_{1},s_{2}\right\}$ and $y\in S_{Y}$ by Eq. (\ref{eq:corollaryEquation}), the proof is complete.
\end{proof}

\begin{remark}
In the undiscounted case ($\lambda =1$), Eq. (\ref{eq:corollaryEquation}) is satisfied for some $\phi >0$ if and only if there exist $s_{1},s_{2}\in S_{X}$ such that
\begin{linenomath}
\begin{align}\label{eq:underOverZero}
\alpha u_{X}\left(s_{1},y\right) +\beta u_{Y}\left(s_{1},y\right) +\gamma &\leqslant 0 \leqslant \alpha u_{X}\left(s_{2},y\right) +\beta u_{Y}\left(s_{2},y\right) +\gamma
\end{align}
\end{linenomath}
for every $y\in S_{Y}$. Moreover, if Eq. (\ref{eq:corollaryEquation}) holds for some $\lambda$, $p_{0}$, $\phi$, and $s_{1},s_{2}\in S_{X}$, then it must be true that Eq. (\ref{eq:underOverZero}) also holds for every $y\in S_{Y}$. If Eq. (\ref{eq:underOverZero}) does not hold for a particular choice of $s_{1},s_{2}\in S_{X}$, then $s_{1}$ and $s_{2}$ cannot form a two-point autocratic strategy for any discounting factor, $\lambda$. Therefore, Eq. (\ref{eq:underOverZero}), which is easy to check, offers a straightforward way to show that two actions cannot form a two-point autocratic strategy for a particular game.
\end{remark}

\begin{remark}\label{rem:twoPointExplicit}
For $\psi$, $s_{1}$, and $s_{2}$, fixed, one can ask which strategies of the form
\begin{linenomath}
\begin{subequations}
\begin{align}
\sigma_{X}^{0} &= p_{0}\delta_{s_{1}} + \left(1-p_{0}\right)\delta_{s_{2}} ; \\
\sigma_{X}\left[x,y\right] &= p\left(x,y\right)\delta_{s_{1}}+\Big(1-p\left(x,y\right)\Big)\delta_{s_{2}} ,
\end{align}
\end{subequations}
\end{linenomath}
for some $p_{0}$ and $p\left(x,y\right)$, satisfy the equation
\begin{linenomath}
\begin{align}
\alpha u_{X}\left(x,y\right) + \beta u_{Y}\left(x,y\right) + \gamma &= \psi\left(x\right) - \lambda\int_{s\in S_{X}}\psi\left(s\right)\,d\sigma_{X}\left[x,y\right]\left(s\right) - \left(1-\lambda\right)\int_{s\in S_{X}}\psi\left(s\right)\,d\sigma_{X}^{0}\left(s\right) .
\end{align}
\end{linenomath}
Indeed, we see from the proof of Corollary \ref{cor:mainCorollary} that, for a strategy of this form, we must have
\begin{linenomath}
\begin{align}\label{eq:omegaExpression}
p\left(x,y\right) &= \frac{\psi\left(s_{2}\right) - \frac{1}{\lambda}\left(\psi\left(x\right) - \left(\alpha u_{X}\left(x,y\right) + \beta u_{Y}\left(x,y\right) + \gamma + \left(1-\lambda\right)\Big(\psi\left(s_{1}\right) p_{0} + \psi\left(s_{2}\right)\left(1-p_{0}\right)\Big)\right)\right)}{\psi\left(s_{2}\right) - \psi\left(s_{1}\right)}
\end{align}
\end{linenomath}
for each $x\in\left\{s_{1},s_{2}\right\}$ and $y\in S_{Y}$. Therefore, this simple case does not capture the generally-implicit nature of autocratic strategies because one can explicitly write down two-point strategies via Eq. (\ref{eq:omegaExpression}), which is typically not possible for strategies concentrated on more than just two actions.
\end{remark}

\section{Examples}\label{si:examples}

Here we present some simple examples of Theorem \ref{thm:mainTheorem} and its implications. In \S\ref{si:sizeFinite}, we demonstrate how Theorem \ref{thm:mainTheorem} reduces to the main result of \citet{press:PNAS:2012} when the action space has only two options. Moreover, we use an action space consisting of three choices to illustrate the implicit nature of autocratic strategies defined via Press-Dyson functions for more than two actions. In \S\ref{si:continuousDonationGame}, we show that there is no way for a player to unilaterally set her own score using Theorem \ref{thm:mainTheorem}. In particular, despite the implicit nature of autocratic strategies, one can use Theorem \ref{thm:mainTheorem} to deduce the non-existence of certain classes of strategies.

\subsection{Games with finitely many actions and no discounting}\label{si:sizeFinite}

Suppose that $S_{X}=S_{Y}=\left\{A_{1},\dots ,A_{n}\right\}$ is finite. If $\psi:S_{X}\rightarrow\mathbb{R}$ and $\sigma_{X}$ is a memory-one strategy, then, for $\lambda =1$,
\begin{linenomath}
\begin{align}
\psi\left(x\right) - &\int_{s\in S_{X}}\psi\left(s\right)\,d\sigma_{X}\left[x,y\right]\left(s\right) \nonumber \\
&= \psi\left(x\right) - \sum_{r=1}^{n}\psi\left(A_{r}\right)\sigma_{X}\left[x,y\right]\left(A_{r}\right) \nonumber \\
&= \psi\left(x\right) - \sum_{r=1}^{n-1}\psi\left(A_{r}\right)\sigma_{X}\left[x,y\right]\left(A_{r}\right) - \psi\left(A_{n}\right)\left(1-\sum_{r=1}^{n-1}\sigma_{X}\left[x,y\right]\left(A_{r}\right)\right) \nonumber \\
&= \psi\left(x\right) - \psi\left(A_{n}\right) - \sum_{r=1}^{n-1}\Big( \psi\left(A_{r}\right) -\psi\left(A_{n}\right) \Big)\sigma_{X}\left[x,y\right]\left(A_{r}\right) \nonumber \\
&= \begin{cases}\displaystyle\sum_{r=1}^{n-1}\Big( \psi\left(A_{n}\right) -\psi\left(A_{r}\right) \Big)\Big(\sigma_{X}\left[x,y\right]\left(A_{r}\right) -\delta_{r,r'}\Big) & x=A_{r'}\neq A_{n}, \\ \displaystyle\sum_{r=1}^{n-1}\Big( \psi\left(A_{n}\right) -\psi\left(A_{r}\right) \Big)\sigma_{X}\left[x,y\right]\left(A_{r}\right) & x=A_{n}.\end{cases}
\end{align}
\end{linenomath}
Therefore, if $n=2$ and $S_{X}=S_{Y}=\left\{A_{1},A_{2}\right\}$, we have
\begin{linenomath}
\begin{align}
\psi\left(x\right) - \int_{s\in S_{X}}\psi\left(s\right)\,d\sigma_{X}\left[x,y\right]\left(s\right) &= \Big(\psi\left(A_{2}\right) -\psi\left(A_{1}\right)\Big)\widetilde{\mathbf{p}} .
\end{align}
\end{linenomath}
Thus, a Press-Dyson function is a scalar multiple of $\widetilde{\mathbf{p}}$ (defined by Eq. (\ref{eq:pressDysonVector}) in the main text).

On the other hand, if $n=3$ and $S_{X}=S_{Y}=\left\{A_{1},A_{2},A_{3}\right\}$, then
\begin{linenomath}
\begin{align}
\psi\left(x\right) - \int_{s\in S_{X}}\psi\left(s\right)\,d\sigma_{X}\left[x,y\right]\left(s\right) &= \begin{cases}c_{1}\sigma_{X}\left[x,y\right]\left(A_{1}\right) + c_{2}\sigma_{X}\left[x,y\right]\left(A_{2}\right) - c_{1} & x=A_{1}, \\ c_{1}\sigma_{X}\left[x,y\right]\left(A_{1}\right) + c_{2}\sigma_{X}\left[x,y\right]\left(A_{2}\right) - c_{2} & x=A_{2}, \\ c_{1}\sigma_{X}\left[x,y\right]\left(A_{1}\right) + c_{2}\sigma_{X}\left[x,y\right]\left(A_{2}\right) & x=A_{3},\end{cases}
\end{align}
\end{linenomath}
where $c_{1}=\psi\left(A_{3}\right) -\psi\left(A_{1}\right)$ and $c_{2}=\psi\left(A_{3}\right) -\psi\left(A_{2}\right)$. For each $x\in S_{X}$ and $y\in S_{Y}$, the measure $\sigma_{X}\left[x,y\right]$ is uniquely determined by $\sigma_{X}\left[x,y\right]\left(A_{1}\right)$ and $\sigma_{X}\left[x,y\right]\left(A_{2}\right)$. Therefore, unlike for $n=2$, one cannot necessarily eliminate both of $c_{1}$ and $c_{2}$ by normalizing the Press-Dyson function. For this reason, as well as due to the fact that a Press-Dyson function reduces to a multiple of the Press-Dyson vector for $n=2$, it is perhaps more natural to refer to any multiple of $\widetilde{\mathbf{p}}$ (as opposed to just $\widetilde{\mathbf{p}}$ itself) as a Press-Dyson vector.

\subsection{Continuous Donation Game}\label{si:continuousDonationGame}

Here we establish some further results claimed in the main text for extortionate, generous, and equalizer strategies for the continuous Donation Game.

\subsubsection{Relationship to the classical Donation Game}

For a two-point strategy, $X$'s action space may be restricted to $S_{X}=\left\{0,K\right\}$. Therefore, the scaling function $\psi :S_{X}\rightarrow\mathbb{R}$ of Theorem \ref{thm:mainTheorem} is defined by two numbers, $\psi\left(0\right)$ and $\psi\left(K\right)$. Letting $\phi :=1/\left(\psi\left(0\right) -\psi\left(K\right)\right)$, we see by Corollary \ref{cor:mainCorollary} in \S\ref{subsec:detailedProofs} that the function
\begin{linenomath}
\begin{align}\label{eq:coopProb}
p\left(x,y\right) &:= \begin{cases}\frac{1}{\lambda}\left(1 - \phi\Big(\chi b\left(K\right) +c\left(K\right) - b\left(y\right) -\chi c\left(y\right) -\left(\chi -1\right)\kappa\Big) -\left(1-\lambda\right) p_{0}\right) & x=K, \\ \frac{1}{\lambda}\left(\phi\Big(b\left(y\right) +\chi c\left(y\right) + \left(\chi -1\right)\kappa\Big) -\left(1-\lambda\right) p_{0}\right) & x=0\end{cases}
\end{align}
\end{linenomath}
gives well-defined reaction probabilities provided $\psi$ (hence $\phi$) and $p_{0}$ are chosen so that $0\leqslant p\left(x,y\right)\leqslant 1$ for each $x\in\left\{0,K\right\}$ and $y\in\left[0,K\right]$. For any such $\psi$, the memory-one strategy,
\begin{linenomath}
\begin{align}\label{eq:twoPointDonation}
\sigma_{X}\left[x,y\right] &:= \Big( 1-p\left(x,y\right) \Big) \delta_{0} + p\left(x,y\right) \delta_{K} ,
\end{align}
\end{linenomath}
together with $\sigma_{X}^{0}=\left(1-p_{0}\right)\delta_{0}+p_{0}\delta_{K}$, defines an autocratic strategy that allows $X$ to enforce the linear relationship
\begin{linenomath}
\begin{align}\label{eq:extortionEquation}
\pi_{X}-\kappa &= \chi\left(\pi_{Y}-\kappa\right) .
\end{align}
\end{linenomath}
Note \eq{twoPointDonation} simply states formally that player $X$ fully cooperates with probability $p\left(x,y\right)$ and defects with probability $1-p\left(x,y\right)$ following an outcome in which $X$ plays $x$ and $Y$ plays $y$.

In the absence of discounting, i.e. in the limit $\lambda\rightarrow 1$, we have
\begin{linenomath}
\begin{subequations}
\label{eq:pgeneral}
\begin{align}
p\left(K,K\right) &= 1 - \phi\left(\chi -1\right)\Big( b\left(K\right) -c\left(K\right) -\kappa\Big) ; \label{eq:KKgeneral} \\
p\left(K,0\right) &= 1 - \phi\Big(\chi b\left(K\right) +c\left(K\right) -\left(\chi -1\right)\kappa\Big) ; \\
p\left(0,K\right) &= \phi\Big( b\left(K\right) +\chi c\left(K\right) +\left(\chi -1\right)\kappa\Big) ; \\
p\left(0,0\right) &= \phi\left(\chi -1\right)\kappa . \label{eq:00general}
\end{align}
\end{subequations}
\end{linenomath}
Thus, setting $b:=b\left(K\right)$, and $c:=c\left(K\right)$, the general form, \eq{pgeneral}, recovers the discrete-action-space case, \eq{pdiscrete}. In particular, the autocratic memory-one strategy in \eq{twoPointDonation} is a direct generalization of zero-determinant strategies to the continuous Donation Game. However, this strategy contains much more information than the corresponding strategy for the classical Donation Game because it encodes $X$'s play in response to $Y$'s for every $y\in\left[0,K\right]$. Despite the fact that player $Y$ has an uncountably infinite number of actions to choose from, player $X$ can still ensure that \eq{extortionEquation} holds by playing only two actions.

\subsubsection{Extortionate and generous strategies}

In the main text, we saw that $X$ can unilaterally enforce $\pi_{X}-\kappa =\chi\left(\pi_{Y}-\kappa\right)$ for $\chi\geqslant 1$ provided $0\leqslant\kappa\leqslant b\left(K\right) -c\left(K\right)$ and $\lambda$ is sufficiently close to $1$. If $\chi =1$, then $\kappa$ is irrelevant and the linear relationship is simply $\pi_{X}=\pi_{Y}$. Here, we show that, if $\chi >1$, then $0\leqslant\kappa\leqslant b\left(K\right) -c\left(K\right)$ is necessary for such a payoff relationship to be enforced via Theorem \ref{thm:mainTheorem}. Indeed, if
\begin{linenomath}
\begin{align}
\Big( u_{X}\left(x,y\right) -\kappa\Big) - \chi\Big( u_{Y}\left(x,y\right) -\kappa\Big) &= -\chi b\left(x\right) -c\left(x\right) + b\left(y\right) + \chi c\left(y\right) + \left(\chi -1\right)\kappa \nonumber \\
&= \psi\left(x\right) - \lambda\int_{s\in S_{X}}\psi\left(s\right)\,d\sigma_{X}\left[x,y\right]\left(s\right) -\left(1-\lambda\right)\int_{s\in S_{X}}\psi\left(s\right)\,d\sigma_{X}^{0}\left(s\right)
\end{align}
\end{linenomath}
for some bounded $\psi :S_{X}\rightarrow\mathbb{R}$ and each $x\in S_{X}$ and $y\in S_{Y}$, then we obtain
\begin{linenomath}
\begin{align}
\left(\chi -1\right) &\geqslant \sup\psi - \lambda\sup\psi - \left(1-\lambda\right)\sup\psi = 0
\end{align}
\end{linenomath}
by taking $y=0$ and letting $\psi\left(x\right)$ approach $\sup\psi$, and we see that
\begin{linenomath}
\begin{align}
-\chi b\left(K\right) -c\left(K\right) + b\left(K\right) +\chi c\left(K\right) + \left(\chi -1\right)\kappa &\leqslant \inf\psi - \lambda\inf\psi - \left(1-\lambda\right)\inf\psi = 0
\end{align}
\end{linenomath}
by taking $y=K$ and letting $\psi\left(x\right)$ approach $\inf\psi$. Since $\chi >1$, we have $0\leqslant\kappa\leqslant b\left(K\right) -c\left(K\right)$.

\begin{figure*}
\includegraphics[scale=0.35]{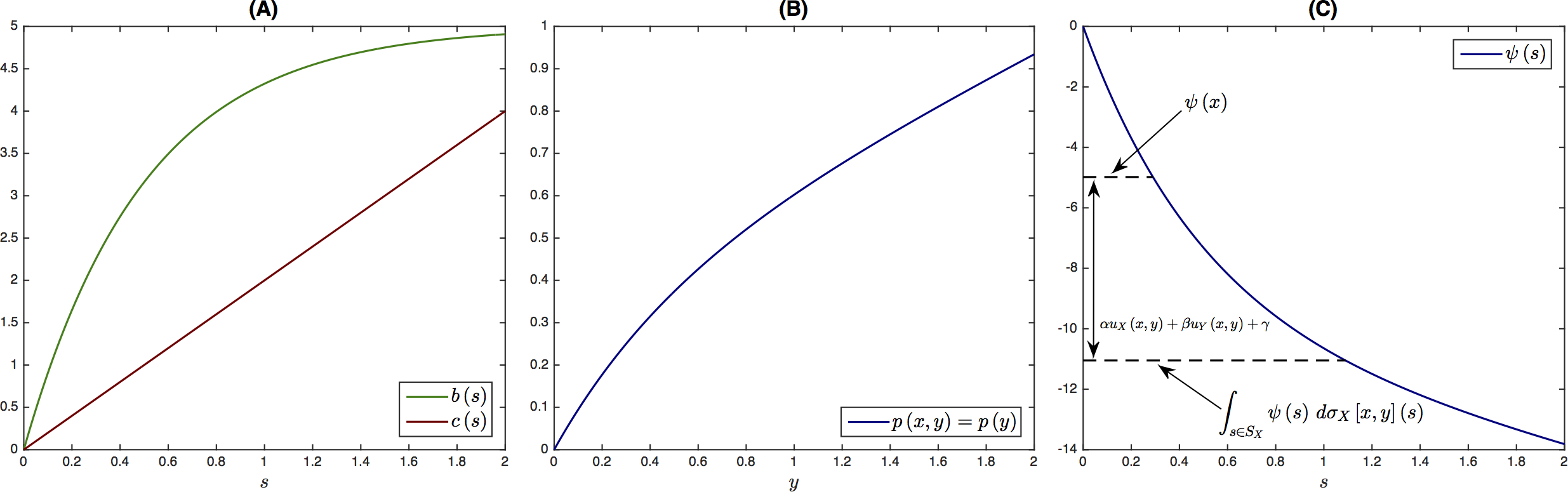}
\caption{Extortion in the continuous Donation Game without discounting. (A) shows linear costs, $c\left(s\right) =2s$, and saturating benefits, $b\left(c\left(s\right)\right) =5\left(1-e^{-c\left(s\right)}\right)$. In (B), the reaction function of player $X$, $p\left(x,y\right) =p\left(y\right)$ (see \eq{twoPointDonationExample}), indicates the probability that player $X$ plays $K$ (as opposed to $0$) after $Y$ uses $y$ and represents a reactive strategy because it depends solely on $Y$'s previous move. In (C), the function $\psi\left(s\right) =-\chi b\left(s\right) -c\left(s\right)$ is shown together with two dashed lines indicating $\psi(x)$ and $\int_{s\in S_{X}}\psi\left(s\right)\,d\sigma_{X}\left[x,y\right]\left(s\right)$, respectively, for $x=0.3$ where $\sigma_{X}\left[x,y\right]$ denotes the memory-one strategy for player $X$ based on the reactive strategy in (A). The vertical distance between the dashed lines must equal $\alpha u_{X}\left(x,y\right) +\beta u_{Y}\left(x,y\right) +\gamma =u_{X}\left(x,y\right) -\chi u_{Y}\left(x,y\right)$ for each $x,y\in\left[0,2\right]$ to satisfy \thm{mainTheorem}.
Parameters: $\lambda =1, \kappa =0, \alpha =1, \beta =-2$, and $\gamma =0$, yielding an extortion factor of $\chi :=-\beta =2$.\label{fig:donationGame}}
\end{figure*}

\subsubsection{Opponent-equalizing strategies}

Although a player cannot set her own score in the continuous Donation Game, she can set the score of her opponent. We saw in the main text that $X$ can set $Y$'s score to anything between $0$ and $b\left(K\right) -c\left(K\right)$ provided $\lambda$ is sufficiently large, and here we show that this interval is the only range of payoffs for player $Y$ that $X$ can unilaterally set via Theorem \ref{thm:mainTheorem}. Indeed, if $\gamma$ satisfies
\begin{linenomath}
\begin{align}
u_{Y}\left(x,y\right) - \gamma &= b\left(x\right) - c\left(y\right) - \gamma \nonumber \\
&= \psi\left(x\right) - \lambda\int_{s\in S_{X}}\psi\left(s\right)\,d\sigma_{X}\left[x,y\right]\left(s\right) -\left(1-\lambda\right)\int_{s\in S_{X}}\psi\left(s\right)\,d\sigma_{X}^{0}\left(s\right)
\end{align}
\end{linenomath}
for some bounded $\psi :S_{X}\rightarrow\mathbb{R}$ and each $x\in S_{X}$ and $y\in S_{Y}$, then we obtain
\begin{linenomath}
\begin{align}
-\gamma &\leqslant \inf\psi - \lambda\inf\psi - \left(1-\lambda\right)\inf\psi = 0
\end{align}
\end{linenomath}
by taking $y=0$ and letting $\psi\left(x\right)$ approach $\inf\psi$, and we find that
\begin{linenomath}
\begin{align}
b\left(K\right) - c\left(K\right) - \gamma &\geqslant \sup\psi - \lambda\sup\psi - \left(1-\lambda\right)\sup\psi = 0
\end{align}
\end{linenomath}
by taking $y=K$ and letting $\psi\left(x\right)$ approach $\sup\psi$. It follows immediately that $0\leqslant\gamma\leqslant b\left(K\right) -c\left(K\right)$.

\subsubsection{Self-equalizing strategies}

We saw in the main text that extortionate strategies exist in the continuous Donation Game, as demonstrated by the two-point strategy defined by Eq. (\ref{eq:twoPointDonationExample}). However, it certainly need not be the case that for any $\alpha ,\beta ,\gamma\in\mathbb{R}$, there exists $\psi:S_{X}\rightarrow\mathbb{R}$ such that
\begin{linenomath}
\begin{align}
\alpha u_{X}\left(x,y\right) + \beta u_{Y}\left(x,y\right) + \gamma &= \psi\left(x\right) -\lambda\int_{s\in S_{X}}\psi\left(s\right)\,d\sigma_{X}\left[x,y\right]\left(s\right) -\left(1-\lambda\right)\int_{s\in S_{X}}\psi\left(s\right)\,d\sigma_{X}^{0}\left(s\right)
\end{align}
\end{linenomath}
for some $\Big(\sigma_{X}^{0},\sigma_{X}\left[x,y\right]\Big)$ and each $x\in S_{X}$ and $y\in S_{Y}$. For example suppose that $\alpha =1$, $\beta =0$, and $\gamma$ is some fixed real number. For $u_{X}\left(x,y\right) =b\left(y\right) -c\left(x\right)$, the equation
\begin{linenomath}
\begin{align}
b\left(y\right) - c\left(x\right) - \gamma &= u_{X}\left(x,y\right) - \gamma = \psi\left(x\right) -\lambda\int_{s\in S_{X}}\psi\left(s\right)\,d\sigma_{X}\left[x,y\right]\left(s\right) - \left(1-\lambda\right)\int_{s\in S_{X}}\psi\left(s\right)\,d\sigma_{X}^{0}\left(s\right)
\end{align}
\end{linenomath}
holds for some $\psi:S_{X}\rightarrow\mathbb{R}$ and $\Big(\sigma_{X}^{0},\sigma_{X}\left[x,y\right]\Big)$, only if
\begin{linenomath}
\begin{align}
\lambda\inf\psi \leqslant \psi\left(x\right) + c\left(x\right) + \gamma - b\left(y\right) -\left(1-\lambda\right)\int_{s\in S_{X}}\psi\left(s\right)\,d\sigma_{X}^{0}\left(s\right) \leqslant \lambda\sup\psi
\end{align}
\end{linenomath}
for each $x,y\in\left[0,K\right]$. However, these inequalities imply that
\begin{linenomath}
\begin{align}
\left(1-\lambda\right)\inf\psi &\geqslant \left(1-\lambda\right)\sup\psi + \Big( b\left(K\right) - c\left(K\right) \Big) ,
\end{align}
\end{linenomath}
which is impossible since $b\left(K\right) >c\left(K\right)$ and $\inf\psi\leqslant\sup\psi$. Thus, there is no feasible Press-Dyson function that allows player $X$ to unilaterally enforce the equation $\pi_{X}=\gamma$. In other words, a player cannot unilaterally set her own payoff.

\subsubsection{Initial actions}

Here we state the conditions on $p_{0}$ (for two-point strategies) and $x_{0}$ (for deterministic strategies) that allow $X$ to enforce autocratic strategies in the continuous Donation Game.

If $p_{0}$ denotes the probability that $X$ initially plays $x=K$ (as opposed to $x=0$), then the function
\begin{linenomath}
\begin{align}
p(x,y) &= \frac{1}{\lambda}\left(\frac{b\left(y\right) + \chi c\left(y\right) +\left(\chi -1\right)\kappa -\left(1-\lambda\right)\Big(\chi b\left(K\right) +c\left(K\right)\Big) p_{0}}{\chi b\left(K\right) + c\left(K\right)}\right)
\end{align}
\end{linenomath}
gives a well-defined reaction probability provided $\lambda$ satisfies Eq. (\ref{eq:deltaConditionDonation}) in the main text and $p_{0}$ satisfies
\begin{linenomath}
\begin{align}\label{sieq:twoPointInitialBoundExtortion}
\max &\left\{\frac{\left(\chi -1\right)\kappa -\lambda\Big(\chi b\left(K\right) +c\left(K\right)\Big) + \Big(b\left(K\right) +\chi c\left(K\right)\Big)}{\left(1-\lambda\right)\Big(\chi b\left(K\right) +c\left(K\right)\Big)} , 0\right\} \nonumber \\
&\leqslant p_{0} \leqslant \min\left\{\frac{\left(\chi -1\right)\kappa}{\left(1-\lambda\right)\Big(\chi b\left(K\right) +c\left(K\right)\Big)} , 1\right\} .
\end{align}
\end{linenomath}
Moreover, for such a $p_{0}$ and $p\left(x,y\right)$, the two-point strategy defined by
\begin{linenomath}
\begin{subequations}
\begin{align}
\sigma_{X}^{0} &= \left(1-p_{0}\right)\delta_{0}+p_{0}\delta_{K} ; \\
\sigma_{X}\left[x,y\right] &= \Big(1-p\left(x,y\right)\Big)\delta_{0}+p\left(x,y\right)\delta_{K}
\end{align}
\end{subequations}
\end{linenomath}
allows $X$ to enforce $\pi_{X}-\kappa =\chi\left(\pi_{Y}-\kappa\right)$. This reaction probability was obtained using the scaling function $\psi\left(s\right) :=-\chi b\left(s\right) -c\left(s\right)$ on $S_{X}=\left\{0,K\right\}$.

If $X$ wishes to enforce $\pi_{Y}=\gamma$ for $0\leqslant\gamma\leqslant b\left(K\right) -c\left(K\right)$, she may use $\psi\left(s\right) :=b\left(s\right)$ on $S_{X}=\left\{0,K\right\}$ and
\begin{linenomath}
\begin{align}
p\left(x,y\right) &= \frac{1}{\lambda}\left(\frac{c\left(y\right) +\gamma}{b\left(K\right)}-\left(1-\lambda\right) p_{0}\right) ,
\end{align}
\end{linenomath}
provided $\lambda\geqslant c\left(K\right) /b\left(K\right)$ and
\begin{linenomath}
\begin{align}\label{sieq:twoPointInitialBoundEqualizer}
\max\left\{\frac{\gamma -\lambda b\left(K\right) +c\left(K\right)}{\left(1-\lambda\right) b\left(K\right)} , 0\right\} \leqslant p_{0} \leqslant \min\left\{\frac{\gamma}{\left(1-\lambda\right) b\left(K\right)} , 1\right\} .
\end{align}
\end{linenomath}

Using $\psi\left(s\right) :=-\chi b\left(s\right) -c\left(s\right)$ on $S_{X}=\left[0,K\right]$, $X$ can also enforce $\pi_{X}-\kappa =\chi\left(\pi_{Y}-\kappa\right)$ using the deterministic strategy with reaction function
\begin{linenomath}
\begin{align}
r^{X}\left(x,y\right) &= \psi^{-1}\left(\frac{-b\left(y\right) -\chi c\left(y\right) -\left(\chi -1\right)\kappa -\left(1-\lambda\right)\psi\left(x_{0}\right)}{\lambda}\right) ,
\end{align}
\end{linenomath}
where $\psi^{-1}\left(\cdots\right)$ denotes the inverse of the function $\psi$, provided Eq. (\ref{eq:deltaConditionDonation}) holds and $X$'s initial action, $x_{0}$, satisfies
\begin{linenomath}
\begin{align}\label{sieq:deterministicInitialBoundExtortion}
\frac{\left(\chi -1\right)\kappa - \lambda\Big(\chi b\left(K\right) +c\left(K\right)\Big) +\Big(b\left(K\right) +\chi c\left(K\right)\Big)}{1-\lambda} \leqslant \chi b\left(x_{0}\right) +c\left(x_{0}\right) \leqslant \frac{\left(\chi -1\right)\kappa}{1-\lambda} .
\end{align}
\end{linenomath}

For $X$ to set $\pi_{Y}=\gamma$ for $0\leqslant\gamma\leqslant b\left(K\right) -c\left(K\right)$ using a deterministic strategy, then she can do so by using $\psi\left(s\right) :=b\left(s\right)$ and the reaction function
\begin{linenomath}
\begin{align}
r^{X}\left(x,y\right) &= \psi^{-1}\left(\frac{c\left(y\right) +\gamma -\left(1-\lambda\right)\psi\left(x_{0}\right)}{\lambda}\right) ,
\end{align}
\end{linenomath}
provided $\lambda\geqslant c\left(K\right) /b\left(K\right)$ and $X$'s initial action, $x_{0}$, satisfies
\begin{linenomath}
\begin{align}\label{sieq:deterministicInitialBoundEqualizer}
\frac{\gamma -\lambda b\left(K\right) +c\left(K\right)}{1-\lambda} \leqslant b\left(x_{0}\right) \leqslant \frac{\gamma}{1-\lambda} .
\end{align}
\end{linenomath}

\end{document}